\newtheorem{theorem}{Theorem}
\DeclarePairedDelimiter{\abs}{\lvert}{\rvert}
\newcommand{\R}{\mathbb{R}}   
\newcolumntype{C}[1]{>{\centering\arraybackslash}p{#1}}
\begin{document}

\preprint{AIP/123-QED}

\title[]{Stochastic modeling of stratospheric temperature}% Force line breaks with \\
\thanks{The PhD grant of M.\ D.\ Eggen is funded by NORSAR. This work was supported by the Research Council of Norway, FRIPRO Young Research Talent SCROLLER project, grant number 299897, as well as the Research Council of Norway FRIPRO/FRINATEK project MADEIRA, grant number 274377.}

\author{Mari Dahl Eggen}
 \email{marideg@math.uio.no.}
\affiliation{ 
Department of Mathematics, University of Oslo, P.O.\ Box 1053 Blindern, 0316 Oslo, Norway.%\\This line break forced with \textbackslash\textbackslash
}

\author{Kristina Rognlien Dahl}
\affiliation{ 
Department of Mathematics, University of Oslo, P.O.\ Box 1053 Blindern, 0316 Oslo, Norway.%\\This line break forced with \textbackslash\textbackslash
}

\author{Sven Peter N{\"a}sholm}
\affiliation{
NORSAR, Gunnar Randers vei 15, 2027 Kjeller, Norway
}
\affiliation{
Department of Informatics, University of Oslo, P.O.\ Box 1080 Blindern, 0316 Oslo, Norway
}

\author{Steffen M{\ae}land}
\affiliation{
NORSAR, Gunnar Randers vei 15, 2027 Kjeller, Norway
}

\date{\today}% It is always \today, today,
             %  but any date may be explicitly specified

\begin{abstract}
This study suggests a stochastic model for time series of daily-zonal (circumpolar) mean stratospheric temperature at a given pressure level. %
It can be seen as an extension of previous studies which have developed stochastic models for surface temperatures. %
The proposed model is a sum of a deterministic seasonality function and a L{\'e}vy-driven multidimensional Ornstein-Uhlenbeck process, which is a mean-reverting stochastic process. %
More specifically, the deseasonalized temperature model is an order $4$ continuous time autoregressive model, meaning that the stratospheric temperature is modeled to be directly dependent on the temperature over four preceding days, while the model's longer-range memory stems from its recursive nature. % 
%
%The order $4$ continuous-time autoregressive model is derived from data analysis suggesting an order $4$ autoregressive model for deseasonalized stratospheric temperature data. %
This study is based on temperature data from the European Centre for Medium-Range Weather Forecasts ERA-Interim reanalysis model product. %, extracted at $10$\;hPa. The final data is the average over four daily values, and over the circumpolar latitude of $60^{\circ}$N. % 
The residuals of the autoregressive model are well-represented by normal inverse Gaussian distributed random variables scaled with a time-dependent volatility function. %
A monthly variability in speed of mean reversion of stratospheric temperature is found, hence suggesting a generalization of the $4$th order continuous time autoregressive model. %
%The characteristics of stratospheric temperature vary significantly between summer and winter, especially in extra-tropical regions. %
%Therefore, an effort is made in deriving a dual stochastic model for stratospheric temperature, in the sense that the year is divided into summer and winter seasons. %
A stochastic stratospheric temperature model, as proposed in this paper, can be used in geophysical analyses to improve the understanding of stratospheric dynamics. In particular, such characterizations of stratospheric temperature may be a step towards greater insight in modeling and prediction of large-scale middle atmospheric events, such as for example sudden stratospheric warmings. %
Through stratosphere-troposphere coupling, the stratosphere is hence a source of extended tropospheric predictability at weekly to monthly timescales, which is of great importance in several societal and industry sectors. The stochastic model might for example contribute to improved pricing of temperature derivatives.  
\end{abstract}

\keywords{Stratospheric temperature, Stochastic modeling, CARMA processes, Time series analysis}%Use showkeys class option if keyword
                              %display desired
\maketitle

%%%%%%%%%%%%%%%%%%%%%%%%%%%%%%%%%%%%%
%%%%%%%%%%%%%%%%%%%%%%%%%%%%%%%%%%%%%

\section{\label{sec: introduction}Introduction}
A thorough understanding of surface weather dynamics is crucial in a wide range of industry and societal sectors. 
Whether planning marine operations, flights or farming, or managing energy assets, the weather is a key aspect to consider. However, because higher atmospheric layers can couple to levels closer to the surface, in order to understand weather, understanding the dynamics at higher altitudes of the atmosphere is important. The Earth's atmosphere has a layered structure, where each layer has layer-specific properties (see \cite{baldwin2019_100years} and the references therein for an historical overview). 
Closest to the surface lays the troposphere, reaching up to around $15$\;km altitude. 
Above, up to around $50$\;km, lays the stratosphere, which is the atmospheric layer of interest in this paper. 
These two layers interact, and the dynamics in the stratosphere can couple to the troposphere to affect dynamics and predictability at the surface, see for example \cite{butler2019chapter} and \cite{BaldwinDunkerton2001}. Hence, better probing, modeling, and understanding of stratospheric dynamics has the potential to enhance numerical surface weather prediction, in particular at weekly to monthly timescales.
%
%Winter dynamics of temperature and winds in the stratosphere are more significant than summer time dynamics, with a greater probability of stratosphere-troposphere coupling which can affect surface weather dynamics, see \cite{KarpechkoWMO2016}, \cite{Pedatella2018}. The winter time phenomenon called sudden stratospheric warmings (SSWs) is a topic of active research. This is because SSWs can highly influence conditions in the whole atmosphere \cite{Pedatella2018}, including atmospheric chemistry, temperatures, winds, non-ionized particle and electron densities, and electric fields. 
%
%Consequently, better understanding of SSWs may give greater insight into chemical processes in the atmosphere, sources of adverse effects on satellite navigation systems and telecommunications, and to atmospheres on other planets, as well as to improve space and surface weather forecasts, see \cite{Pedatella2018}. 
%

In the current paper, a novel stochastic model for stratospheric temperature is proposed. 
%This is fit to stratospheric temperature data from 1979 to 2019 as represented in the ERA-Interim reanalysis product (\cite{ecmwf_stratospheric_temp}). 
%
%Furthermore, a stochastic characterization of atmospheric model products might help assess these models and could be a step towards enhanced understanding of the dynamical behaviour of the stratosphere and its couplings to other layers. 
%
The stochastic approach is similar to what has been applied in previous tropospheric temperature and wind dynamics modeling studies (e.g.,  \cite{benth05}, \cite{benth2008}, \cite{benth09}, \cite{benth13} and \cite{benth14}). 

Temperature tends to revert back to its mean over time (see, e.g.,  \cite{benth2008}). This feature is reflected in what is called the speed of mean reversion, and is captured by autoregressive (AR) processes. AR processes are discrete time stochastic processes having a direct transformation relation with continuous time autoregressive (CAR) processes \cite{brockwell04}, \cite{benth2008}, \cite{benth13}. 
This transformation allows to introduce a continuous time mathematical model framework based on empirical derivations and analyses. 
%CAR processes are given by the first element of a multidimensional Ornstein-Uhlenbeck (OU) process, to be introduced in Sect.\,\ref{sect:car_ar_connection}. Multidimensional OU processes are also known as continuous-time autoregressive moving average (CARMA) processes, see \cite{brockwell04}, \cite{brockwell15}, \cite{brockwell2001levy}, \cite{brockwell2014recent}. 
%
Periodical behaviour 
%, such as yearly seasonal cycles and effects of quasi-biennial oscillations (QBO),%
is modeled separately from the CAR process. So is a long-term trend in the stratospheric temperature. The reason for the inclusion of a long-term trend in stochastic models for tropospheric temperature, is that it is well known from climate research that there is a long-term warming of the troposphere (see, e.g., \cite{jones1990globalwarming}, \cite{sevellec2018globalwarming}, \cite{climateChange2013}). This effect is captured in the surface temperature modeling of \cite{benth2008}. Also in \cite{benth2008}, cyclic functions are included through truncated Fourier series which can represent the periodical movements of tropospheric temperature. 
Similarly, several studies have shown (e.g.,  \cite{cnossen2015}, \cite{danilov20}, \cite{steiner20}), that there is a long-term decreasing trend in stratospheric temperature, and that there are several cyclic (seasonal) patterns \cite{fu10}, \cite{mccormack96}. 
%
%To allow the purposed stochastic model for stratospheric temperature to capture these trend and seasonality effects, a time-dependent function called the deseasonalized temperature function $Y(t)=S(t)-\Lambda(t)$ is introduced.
%
%Here, $S(t)$ is a function representing stratospheric temperature, and $\Lambda(t)$ is a deterministic, bounded and continuously differentiable seasonality function. A L{\'e}vy driven CAR process with constant speed of mean reversion and a deterministic time varying volatility function is purposed to model the deseasonalized temperature function $Y(t)$. This is inspired by the approach in \cite{benth05} and \cite{benth13}. Exploiting the mentioned transformation relation between AR and CAR processes, the proposed CAR process is proven to be a suitable model for deseasonalized stratospheric temperature. %

The current empirical analysis and stochastic model study is performed on temperatures as represented in the European Centre for Medium-Range Weather Forecasts (ECMWF) ERA-Interim atmospheric reanalysis model product (see \cite{ecmwf_stratospheric_temp} and \cite{dee2011}). Full-year daily-zonal mean stratospheric temperature data over $60^{\circ}$N and $10$\;hPa, from 1979 to 2018, are considered. Similarly as in \cite{benth2008}, a seasonality function in the form of a truncated Fourier series is fitted to the stratospheric temperature data to find deseasonalized temperature. Then, an AR model is fitted to the deseasonalized data and thereby subtracted to find the residuals. A search for seasonal heteroskedasticity (variability of variance over time) in the residuals is performed, and such heteroskedasticity is found. Based on this, a time-dependent volatility function, $\sigma(t)$, is defined. The $\sigma(t)$-scaled residuals are proven, with statistical significance, to be normal inverse Gaussian (NIG) distributed random variables. 
Hence, the results of the data analysis suggest using a L{\'e}vy-driven CAR process with a time-varying volatility function to model stratospheric temperature. The residuals contain small memory effects, indicating that it might be reasonable to also consider a stochastic volatility function. This is beyond the scope of this paper. 
%
%Despite this observation, extensions of the proposed stochastic model including a stochastic volatility function are not studied in this paper. 
%

An individual stability analysis of speed of mean reversion over time is also performed, suggesting that the assumption about constant speed of mean reversion is not fulfilled. 
The result is twofold; the speed of mean reversion shows large variability from month to month, and it is varying with a seasonal pattern. 
Similarly, \cite{zapranis08} proved that the speed of mean reversion for tropospheric temperature is strongly time-dependent, obtaining a series of daily values of speed of mean reversion through neural networks. However, in contrast to the current paper, they did not observe seasonal patterns. 
In \cite{benth15}, Ornstein-Uhlenbeck (OU) dynamics are generalized to allow for a stochastic speed of mean reversion, which can incorporate deterministic time dependence as well. However, \cite{benth15} consider the special case when the L{\'e}vy process is a Brownian motion. This is less general than the CAR process proposed for stratospheric temperature modeling in the current paper. Instead of using the aforementioned approaches to include time dependence in the speed of mean reversion of the CAR process, a simpler, approximate approach is suggested: A time-dependent step function with $12$ levels is introduced, where the levels represent the months of the year. In this way both the monthly variability and the seasonal behaviour are adjusted for. The procedure of fitting a CAR process to the stratospheric temperature data is repeated for the extended CAR process with time dependence in speed of mean reversion. The inclusion of time dependence does not change the outcome of $\sigma(t)$-scaled residuals: These are still NIG distributed random variables. 
%
%In the model analysis of stratospheric temperature, both the yearly heteroskedasticity and the monthly instability of speed of mean reversion, indicate clear differences between summer time and winter time stratospheric temperature dynamics. These results are in line with observations in \cite{KarpechkoWMO2016} and \cite{Pedatella2018}. Based on this, a dual stratospheric temperature model is suggested. That is, each year is divided into summer and winter seasons, and it is assumed that the stratospheric temperature dynamics in summer and winter time can be modeled by two independent L{\'e}vy driven CAR processes respectively. By using the same approach as for the yearly model for each of the two CAR processes (summer and winter model), the models are fit to their respective season dataset. The summer time process has time-varying speed of mean reversion and volatility, and is driven by a standard Brownian motion process. The winter time process has constant speed of mean reversion and time-varying volatility, and is driven by a stochastic process with NIG distributed increments.
%

The structure of the paper is as follows: In Sect.\,\ref{sect:preliminaries_ch}, a mathematical framework of the stochastic model for stratospheric temperature is proposed. In Sect.\,\ref{sect:main_analysis_ch}, a non-Gaussian CAR process with constant speed of mean reversion and time-dependent volatility function is introduced and proposed to model stratospheric temperature. The methodology for fitting the model to the stratospheric temperature data is explained. Furthermore, it is shown that the most important features of stratospheric temperature dynamics are explainable through the proposed model. Then, in Sect.\,\ref{sec:mean_reversion} a stability analysis of speed of mean reversion is performed, revealing that the proposed CAR process should be generalized to include a time-dependent speed of mean reversion. Finally, conclusions and suggestions for future work are provided in Sect.\,\ref{sec: conclusions}.

%%%%%%%%%%%%%%%%%%%%%%%%%%%%%%%%%%%%%
%%%%%%%%%%%%%%%%%%%%%%%%%%%%%%%%%%%%%

\section{\label{sect:preliminaries_ch}The structure of a stratospheric temperature model}
In this section, it is argued that stratospheric temperature exhibits an autoregressive behavior. This motivates the use of autoregressive models. For completeness, the definitions of discrete and continuous time autoregressive processes are recalled. It is also explained how these two kinds of processes are connected. 

%%%%%%%%%%%%%%%%%%%%%%%%%%%%%%%%%%%%%
%%%%%%%%%%%%%%%%%%%%%%%%%%%%%%%%%%%%%

\subsection{Autoregressive behaviour of stratospheric temperature dynamics\label{sec:autoreg_behav}}
%
%
%
%The main reason for proposing the following theoretical framework to model stratospheric temperature dynamics is the autoregressive behaviour of stratospheric temperature. 
Inspection of daily-zonal mean stratospheric temperature data over $60^{\circ}$N and $10$\;hPa, from now on referred to as the stratospheric temperature data, $S(t)$, clearly indicates a seasonal pattern. This is illustrated in Fig.\,\ref{fig:temp_10y_with_seasonfit} which displays 10 years of stratospheric temperature data. 
The corresponding autocorrelation function (ACF), computed over 1 January 1979 to 31 December 2018 with lags up to $730$ days ($2$ years), is presented in Fig.\,\ref{fig:acf_strat_temp}. The ACF pattern confirms a stratospheric temperature seasonal behaviour. %
\par
\begin{figure}[ht!]
    \centering
    \includegraphics[scale=0.5]{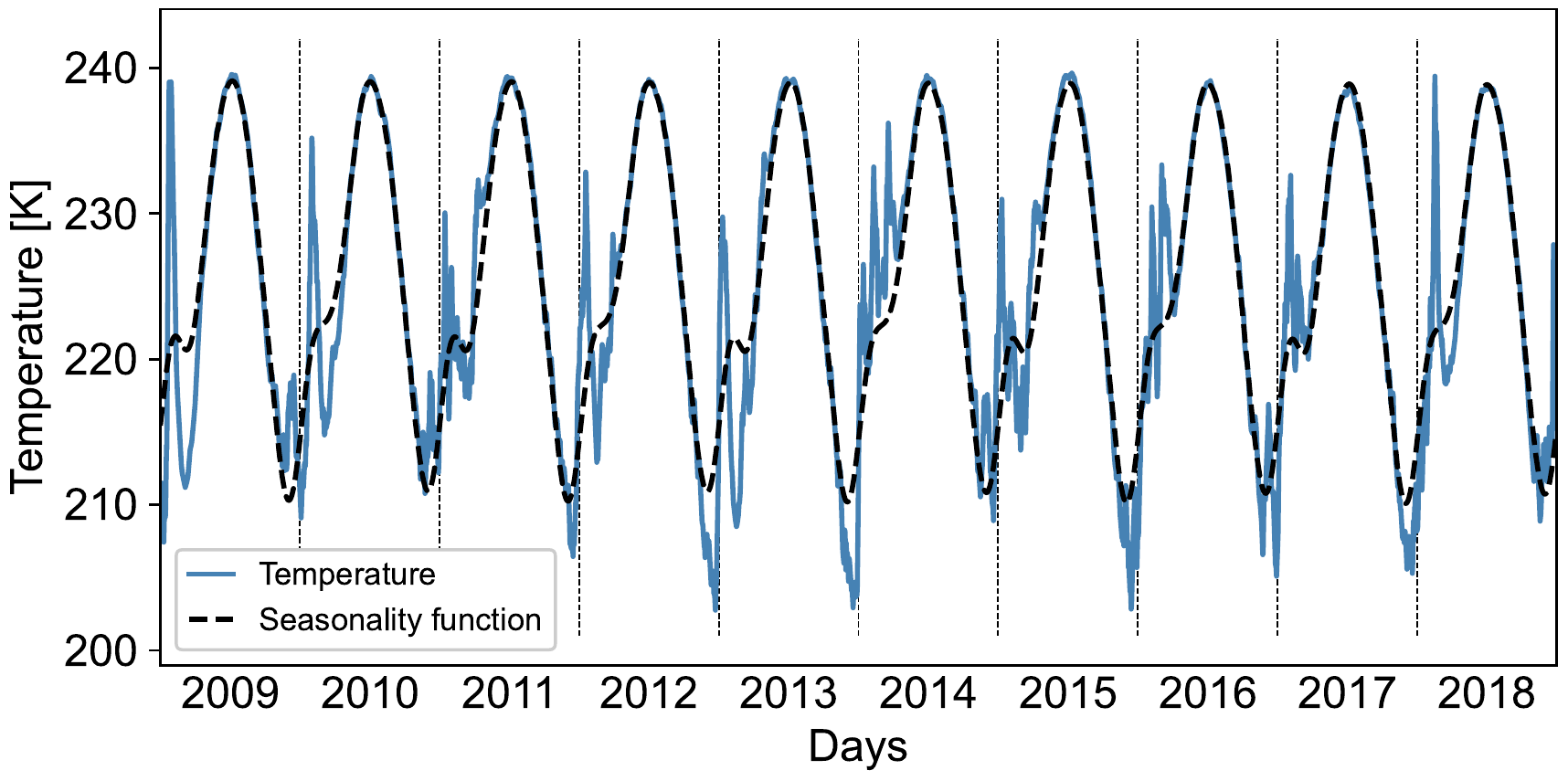}
    \caption{Daily-zonal mean stratospheric temperature, $S(t)$, over a region $R$ (see Sect.\,\ref{sec:strat_temp_data_retrieval}) the last 10 years (1 January 2009 to 31 December 2018) with the fitted seasonality function $\Lambda(t)$. The vertical lines represent each of the 10 years}
    \label{fig:temp_10y_with_seasonfit}
\end{figure} %\\
\begin{figure}[ht!]
	\centering
	\begin{subfigure}[ht!]{0.3\textwidth}
	\centering	
	\includegraphics[scale=0.6]{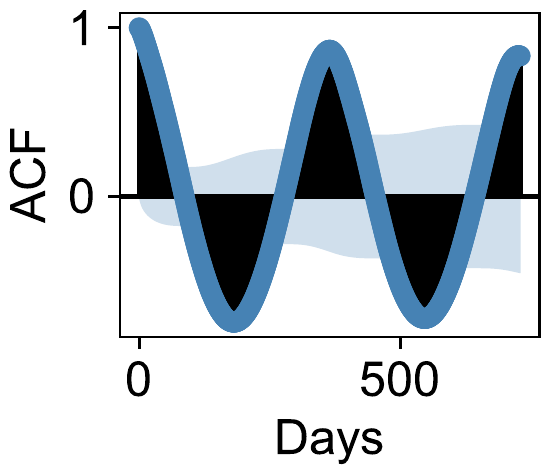}
	\caption{ACF of $S(t)$}
	\label{fig:acf_strat_temp}
	\end{subfigure}
	\begin{subfigure}[ht!]{0.3\textwidth}
	\centering	
	\includegraphics[scale=0.6]{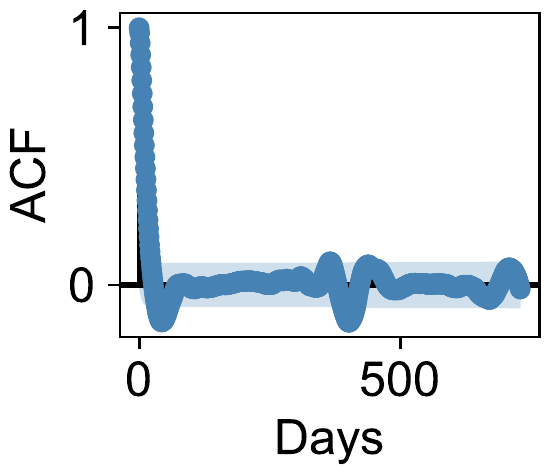}
	\caption{ACF of $Y(t)$}
	\label{fig:acf_strat_temp_ds}
	\end{subfigure}
		\begin{subfigure}[ht!]{0.3\textwidth}
	\centering	
	\includegraphics[scale=0.6]{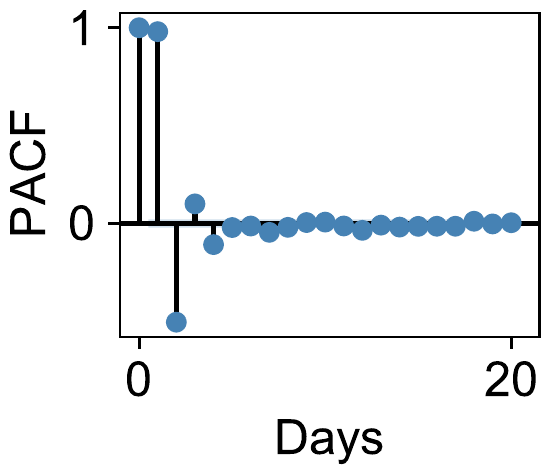}
	\caption{PACF of $Y(t)$}
	\label{fig:pacf_strat_temp_ds}
	\end{subfigure}
    \caption{ACF of stratospheric temperature, $S(t)$, and ACF and PACF of deseasonalized stratospheric temperature, $Y(t)$ (see Eq.\,\eqref{eq:general_temp_model})}
    \label{fig:acf_pacf_temp_ds}
\end{figure} %\\
Inspired by prior work in stochastic modeling of surface temperature and wind dynamics, in the context of financial weather contracts \citep[see, e.g.,][]{benth05,benth2008,benth09,benth13,benthbook13}, a long-term seasonality and trend function is fit to the stratospheric temperature data, see Sect.\,\ref{sec:season_function_data}. The long-term seasonality and trend function is from now on referred to as the seasonality function and deseasonalized temperature is obtained by subtracting the fitted seasonality function from the original dataset. 

Denote by $Y(t)$ the deseasonalized version of the stratospheric temperature, $S(t)$. Further, define $\Lambda (t): [0,T]\to \mathbb{R}$ to be a bounded and continuously differentiable (deterministic) seasonality function. Thus, stratospheric temperature is modeled as
\begin{align}
\label{eq:general_temp_model}
    S(t) = \Lambda(t) + Y(t).
\end{align}
Note that although the seasonality function $\Lambda(t)$ is deterministic, the stratospheric temperature, $S(t)$, and the deseasonalized temperature, $Y(t)$, are stochastic. Let $\Omega$ be a scenario space. Then, both $S(t)$ and $Y(t)$ depend on some scenario $\omega \in \Omega$, that is, $S(t) \triangleq S(t,\omega)$, $Y(t) \triangleq Y(t,\omega)$. For notational convenience, the scenario $\omega$ is suppressed from the notation for the remaining part of the paper.
In Sect.\,\ref{sec:season_function_data}, a review of possible seasonal effects is given prior to the explicit definition of the seasonality function $\Lambda(t)$. There, it will also be shown that a truncated Fourier series with linear trend is an appropriate choice for the seasonality function (see Eq.\,\eqref{eq:lambda_def}). 

Studying the ACF and partial autocorrelation function (PACF) of the deseasonalized temperature data, $Y(t)$, (Figs.\,\ref{fig:acf_strat_temp_ds} and \ref{fig:pacf_strat_temp_ds}) it is found that the deseasonalized temperature dynamics follows an AR process. For completeness, the definition of AR processes is given in the next section. Further, the PACF of the deseasonalized stratospheric temperature (Fig.\,\ref{fig:pacf_strat_temp_ds}) indicates that an AR($4$) model should be used to capture significant memory effects (see \cite{Levendis18} for an introduction to AR modeling and the interpretation of PACF plots). 
This means that the direct memory effects in stratospheric temperature last for four days in this model. However, due to the recursive properties of AR processes, the total memory effect is actually longer. 
As explained in Sect.\,\ref{sect:car_ar_connection}, there is a transformation relation between (discrete time) AR($p$) and (continuous time) CAR($p$) processes. This means that, by removing the seasonal behaviour in stratospheric temperature data, the resulting deseasonalized stratospheric temperature data can be modeled by a CAR process. It will be shown that $Y(t)$ can be approximated by a CAR($4$) model. 

%%%%%%%%%%%%%%%%%%%%%%%%%%%%%%%%%%%%%
%%%%%%%%%%%%%%%%%%%%%%%%%%%%%%%%%%%%%

\subsection{Non-Gaussian CAR($p$) processes and their connection to AR($p$) processes\label{sect:car_ar_connection}}

In this section, AR and CAR processes are defined. Motivated by the observed ACF and PACF for the deseasonalized stratospheric temperature data (see Sect.\,\ref{sec:autoreg_behav}), these processes are natural components in a model for stratospheric temperature. They are also part of surface temperature models applied in energy markets contexts (\cite{benth05}, \cite{benth13}, \cite{benth2008}, Ch.\,10 and \cite{benthbook13}, Ch.\,4). 

Suppose $(\Omega, \mathcal{F}, \{\mathcal{F}_t\}_{t \geq 0}, P)$ is a complete filtered probability space. Let $\boldsymbol{X}(t)=\{\boldsymbol{X}(t)\}_{t\in\mathbb{R}^+}$ be a stochastic process in $\mathbb{R}^p$, $p \in \mathbb{N}$, defined by a multidimensional non-Gaussian OU process with time-dependent volatility. 
That is, $\boldsymbol{X}(t)$ is given by the solution of the stochastic differential equation (SDE)
\begin{align}
\label{eq:car(p)-model}
    d\boldsymbol{X}(t)= A\boldsymbol{X}(t)dt + \boldsymbol{e}_p\sigma(t-)dL(t),
\end{align}
where $\boldsymbol{e}_p$ is the $p$-th unit vector in $\mathbb{R}^p$, $\sigma(t): \mathbb{R}^{+}\to \mathbb{R}^{+}$ is a càdlàg, $\mathcal{F}_t$-adapted function, $L(t) = \{L(t)\}_{t\in\mathbb{R}^+}$ is a L{\'e}vy process, and $A$ is the $p\times p$ coefficient matrix
\begin{align}
    \label{eq:ou_matrix_p}
    A = \begin{bmatrix}
       0 & 1 & 0 & \cdots & 0       \\
       0 & 0 & 1 &\cdots & 0        \\
       \vdots & \vdots & \vdots &\vdots &\vdots \\
       0 & 0 & 0 & \cdots & 1 \\
       -\alpha_p & -\alpha_{p-1} & -\alpha_{p-2} & \cdots & -\alpha_1 \\
     \end{bmatrix},
\end{align}
where $\alpha_k$, $k=1,\ldots , p$, are positive constants. By the multidimensional Itô formula the solution of Eq.\,\eqref{eq:car(p)-model} is given by
\begin{align*}
    \boldsymbol{X}(s) = \exp\left( A(s-t) \right)\boldsymbol{x} + \int_{t}^{s}\exp(A(s-u))\boldsymbol{e}_p\sigma(u-)dL(u),
\end{align*}
where $s\geq t \geq 0$ and $\boldsymbol{X}(t) \triangleq \boldsymbol{x}\in\R^p$. Let $p>j>q\in\mathbb{N}$. Then, the L{\'e}vy-driven stochastic process $Y(t) = \{Y(t)\}_{t\in\mathbb{R^+}}$ defined as $Y(t) \triangleq \boldsymbol{b}'\boldsymbol{X}(t)$, for a transposed vector $\boldsymbol{b}\in\mathbb{R}^{p\times 1}$ with elements satisfying $b_q\neq 0$ (sometimes, $b_q=1$ is assumed) and $b_j=0$, is called a (L{\'e}vy-driven) CARMA($p,q$) process (e.g.,  \cite{brockwell04}, \cite{brockwell15}, \cite{brockwell2001levy} and \cite{brockwell2014recent}). The simplified version of a CARMA($p,q$) process $Y(t)$ where $\boldsymbol{b} = \boldsymbol{e}_1$, meaning $q=1$ and 
\begin{align}
    \label{eq:strat_temp_model}
    Y(t) = X_1(t),
\end{align}
is called a CAR($p$) process. Note that $p$ is the direct time lag dependence in $Y(t)$. As seen in for example \cite{benth05} and \cite{benth09}, the CAR($p$) model framework suitable for capturing surface temperature and wind evolution. Therefore, it is used in modeling of weather dynamics, for example in relation to financial weather contracts. In the current paper, it will be proved that the deseasonalized stratospheric temperature, $Y(t)$, can be modeled by a CAR($p$) process as in Eq.\,\eqref{eq:strat_temp_model}. 

Now, for a discrete time framework version, consider an AR($p$) process given by
\begin{align}
    X(t) = \beta_1X(t-1) + \beta_2X(t-2) + \cdots + \beta_pX(t-p) + e(t),
    \label{eq:ar(p)-process}
\end{align}
where $X(t)\in\mathbb{R}$ is the value of the AR process at times $t=0,1,\ldots$, and $\beta_k$, $k=1,\ldots , p$, are constant coefficients and $e(t)$ are i.i.d.\ random error terms. The dynamics of a L{\'e}vy-driven CARMA($p,q$) process, see Eq.\,\eqref{eq:car(p)-model}, can be expressed as
\begin{align}
\label{eq:CAR_AR}
	dX_q(t) = 
  \begin{cases} 
  X_{q+1}(t)dt  & \text{if } q = 1,\ldots , p-1 \\
   \displaystyle -\sum_{q=1}^{p} \alpha_{p-q+1}X_{q}(t)dt + \sigma(t-) dL(t)  & \text{if } q=p.
  \end{cases}
\end{align}
By discretization of the expression in Eq.\,\eqref{eq:CAR_AR}, see \cite{benth2008} and \cite{benth13}, a transformation relation between (discrete time) AR($p$) processes and the corresponding (continuous time) CAR($p$) processes is obtained. That is, a transformation relation between $X(t)$ in Eq.\,\eqref{eq:ar(p)-process} and $Y(t)$ in Eq.\,\eqref{eq:strat_temp_model}. See Ch.\,$10$ of \cite{benth2008} for a detailed derivation. Note that the connection between the AR and CAR processes is primarily useful because the continuous version, CAR, allows for deriving analytical results more easily, via stochastic analysis. For instance, \cite{benth2008} uses CAR processes to model surface temperature, which is later used to price options. 
To the best of our knowledge, financial products based directly on stratospheric data are not commonly available. However, as the state of the stratosphere is connected to long-term surface weather forecasting, the CAR model may be of interest for pricing financial weather contracts with long-term maturity. 
Further developments may also aim for a stratospheric temperature model where a control is involved. This means a situation where one may affect the stratospheric temperature directly, or indirectly, via for example carbon emissions. 

Now consider the special case when $p=4$, which will be proven to be well suited for modeling of stratospheric temperature, as assumed from observations in Sect.\,\ref{sec:autoreg_behav}. 
The dynamics of the CAR($4$) process, see Eq.\,\eqref{eq:car(p)-model}, can be written as
\begin{align*}
      \begin{bmatrix}
       dX_1(t)\\
       dX_2(t)\\
       dX_3(t)\\
       dX_4(t)\\
     \end{bmatrix}
      &= \begin{bmatrix}
       0 & 1 & 0 & 0  \\
       0 & 0 & 1 & 0  \\
       0 & 0 & 0 & 1 \\
        -\alpha_4 & -\alpha_3 & -\alpha_2 & -\alpha_1 \\
     \end{bmatrix}
     \cdot
      \begin{bmatrix}
       X_1(t)dt\\
       X_2(t)dt\\
       X_3(t)dt\\
       X_4(t)dt\\
     \end{bmatrix}
     +  \begin{bmatrix}
       0\\
       0\\
       0\\
       \sigma(t-)dL(t)\\
     \end{bmatrix}\\[\bigskipamount] %brukes for å legge inn litt vertikal luft mellom linjer i matematiske uttrykk
     &= \begin{bmatrix}
       X_2(t)dt\\
       X_3(t)dt\\
       X_4(t)dt\\
       -\alpha_4 X_1(t)dt - \alpha_3 X_2(t)dt -\alpha_2 X_3(t)dt-\alpha_1 X_4(t)dt+ \sigma(t-)dL(t)\\
     \end{bmatrix}.
\end{align*}
Note that the dynamics have the form as described in Eq.\,\eqref{eq:CAR_AR}. By the transformation relation between AR($p$) processes and CAR($p$) processes, see \cite{benth2008}, it is found that the model coefficients of the CAR($4$) process are given by
\begin{equation}
\begin{array}{lll}
\alpha_1 &= 4-\beta_1, \alpha_2 = -3\beta_1 - \beta_2 + 6, \\
\alpha_3 &= -3\beta_1 - 2\beta_2 - \beta_3 + 4, \alpha_4 = - \beta_1 - \beta_2 - \beta_3 - \beta_4 + 1.
    \label{eq:ar_car_connection}
\end{array}
\end{equation}

The matrix $A$, see Eq.\,\eqref{eq:ou_matrix_p}, is referred to as the speed of mean reversion throughout the paper. This concept was introduced through half-life computations for Brownian motion-driven (one-dimensional) OU processes in \cite{clewlow00}, Sect.\,$2.4$. That is, for some $s>t$ and a drift coefficient $\alpha$, the formula
\begin{align}
\label{eq: formula}
     s-t = \frac{\ln(2)}{\alpha}
\end{align}
gives the time until a shock $X_{(1)}(t)$ away from the process' long-term mean returns half-way back to this long-term mean (see also \cite{benth13}). For an OU process, the drift coefficient $\alpha$ is the only variable affecting the half-life. As large $\alpha$ gives shorter half-life, and smaller $\alpha$ gives longer half-life, $\alpha$ is referred to as speed of mean reversion. 
In the current paper, non-Gaussian CAR (CARMA) processes are considered rather than standard OU processes. The half-life formula for non-Gaussian CARMA processes is state-dependent, see \cite{benth13}, meaning that the time $s$ in Eq. \eqref{eq: formula} is a stopping time. Denote this stopping time by $\tau$. The special case when the non-Gaussian CARMA process is a CAR process (the process considered in the remaining parts of the paper) gives a half-life formula of the form
\begin{align}
    \label{eq:CAR_speed}
    \boldsymbol{e}_1'\left(\exp(A(\tau-t)) - \frac{1}{2}I\right)\boldsymbol{X}(t) = 0,
\end{align}
see \cite{benth_speed_mean_reversion_note}. Solving this equation for $\tau$ analytically is difficult, and hence it is not clear how the coefficient matrix $A$ affects the process' half-life. The coefficient matrix $A$ will still be referred to as the speed of mean reversion, where each matrix element $\alpha_i$ is assumed to be a contribution to the speed of mean reversion. 

A CAR($4$) model driven by the multidimensional OU process in Eq.\,\eqref{eq:car(p)-model} assumes constant speed of mean reversion. In Sect.\,\ref{sec:mean_reversion}, it will be shown that this assumption is not valid for our dataset. The stratospheric temperature data indicates a seasonal varying pattern in speed of mean reversion from month to month. Based on this observation, an extended model framework is proposed. That is, a CAR($4$) model driven by a multidimensional OU process with time varying speed of mean reversion. The theorem below gives an explicit formula for the (unique) solution of the multidimensional OU SDE driven by a L{\'e}vy process with time-dependent speed of mean reversion.
\begin{theorem}
\label{thm:vectorial_OU_4}
Let $\boldsymbol{X}(t)$ be given by the multidimensional OU process
\begin{align}
\label{eq:CAR_t_dep_speed}
    d\boldsymbol{X}(t)= A(t)\boldsymbol{X}(t)dt + \boldsymbol{e}_4\sigma(t-)dL(t),
\end{align}
where $A(t)$ is the $4\times 4$-matrix
\begin{align}
    \label{eq:ou_matrix_tdep}
    A(t) = \begin{bmatrix}
       0 & 1 & 0 & 0       \\
       0 & 0 & 1 & 0        \\
       0 & 0 & 0 & 1 \\
       -\alpha_4(t) & -\alpha_3(t) & -\alpha_2(t) & -\alpha_1(t) \\
     \end{bmatrix}.
\end{align}
Then 
\begin{align}
    \label{eq:solution_OU_4}
    \boldsymbol{X}(t) =& \exp\left(\int_0^t A(s)ds\right)\boldsymbol{x} + \exp\left(\int_0^t A(s)ds\right)\int_0^t \exp\left(-\int_0^s A(u)du\right)\boldsymbol{e}_4\sigma(s-)dL(s),
\end{align}
where $\boldsymbol{x}\triangleq \boldsymbol{X}(0)$.
\end{theorem}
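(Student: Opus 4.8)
The plan is to verify directly that the expression in Eq.\,\eqref{eq:solution_OU_4} solves the SDE in Eq.\,\eqref{eq:CAR_t_dep_speed}, and then to establish uniqueness, using the same semimartingale calculus that produced the closed form in the constant-coefficient case of Eq.\,\eqref{eq:car(p)-model}. First I would introduce the deterministic fundamental matrix $\Phi(t) \triangleq \exp\bigl(\int_0^t A(s)\,ds\bigr)$ together with its inverse $\Phi(t)^{-1} = \exp\bigl(-\int_0^t A(s)\,ds\bigr)$, so that the candidate solution factorizes as $\boldsymbol{X}(t) = \Phi(t)\boldsymbol{Z}(t)$ with $\boldsymbol{Z}(t) \triangleq \boldsymbol{x} + \int_0^t \Phi(s)^{-1}\boldsymbol{e}_4\sigma(s-)\,dL(s)$. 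The key analytic facts I would need are the matrix identities $d\Phi(t) = A(t)\Phi(t)\,dt$ and $d\Phi(t)^{-1} = -\Phi(t)^{-1}A(t)\,dt$, with $\Phi(0) = I$.

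Next I would apply the product (integration-by-parts) rule for semimartingales to $\boldsymbol{X}(t) = \Phi(t)\boldsymbol{Z}(t)$. Since $\Phi$ is continuous and of finite variation, its covariation with the semimartingale $\boldsymbol{Z}$ vanishes, so the rule reduces to $d\boldsymbol{X}(t) = \bigl(d\Phi(t)\bigr)\boldsymbol{Z}(t) + \Phi(t)\,d\boldsymbol{Z}(t)$. Substituting $d\Phi(t) = A(t)\Phi(t)\,dt$ and $d\boldsymbol{Z}(t) = \Phi(t)^{-1}\boldsymbol{e}_4\sigma(t-)\,dL(t)$ yields $d\boldsymbol{X}(t) = A(t)\Phi(t)\boldsymbol{Z}(t)\,dt + \Phi(t)\Phi(t)^{-1}\boldsymbol{e}_4\sigma(t-)\,dL(t) = A(t)\boldsymbol{X}(t)\,dt + \boldsymbol{e}_4\sigma(t-)\,dL(t)$, which is exactly Eq.\,\eqref{eq:CAR_t_dep_speed}. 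The initial condition follows immediately from $\Phi(0) = I$ and $\boldsymbol{Z}(0) = \boldsymbol{x}$, giving $\boldsymbol{X}(0) = \boldsymbol{x}$. For uniqueness I would take two solutions and observe that their difference $\boldsymbol{D}(t)$ satisfies the homogeneous deterministic system $d\boldsymbol{D}(t) = A(t)\boldsymbol{D}(t)\,dt$ with $\boldsymbol{D}(0) = \boldsymbol{0}$, since the common stochastic integral term cancels; a Grönwall estimate using boundedness of the matrix norm on $[0,t]$ then forces $\boldsymbol{D} \equiv \boldsymbol{0}$.

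The step I expect to be the main obstacle is justifying $\frac{d}{dt}\exp\bigl(\int_0^t A(s)\,ds\bigr) = A(t)\exp\bigl(\int_0^t A(s)\,ds\bigr)$. This identity is \emph{not} automatic for a time-varying matrix: it holds precisely when $A(t)$ commutes with its own integral $\int_0^t A(s)\,ds$ for every $t$; otherwise the genuine fundamental matrix is the time-ordered exponential rather than the exponential of the integral, and the two differ by commutator terms. I would therefore either invoke commutativity of the family $\{A(s)\}_s$ in the regime considered, or — consistent with the piecewise-constant $12$-level step function introduced in Sect.\,\ref{sec:mean_reversion} — treat $A(t)$ as constant on each subinterval, where the matrix exponential is exact, and handle the interfaces by concatenating the corresponding fundamental matrices. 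Pinning down this commutation hypothesis is the crux; once it is in place, the verification and the uniqueness argument reduce to routine applications of the multidimensional Itô calculus already used in the constant-coefficient setting.
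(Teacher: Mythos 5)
Your verification strategy is essentially the paper's own: the paper applies the multidimensional It\^o formula to $\boldsymbol{Y}(t)=\exp\left(-\int_0^t A(s)\,ds\right)\boldsymbol{X}(t)$ and shows $d\boldsymbol{Y}(t)=\exp\left(-\int_0^t A(s)\,ds\right)\boldsymbol{e}_4\sigma(t-)\,dL(t)$, which is exactly your product-rule computation for $\Phi(t)^{-1}\boldsymbol{X}(t)$ (the vanishing covariation in your version is the reason all second-order and compensated-jump terms cancel in the paper's). Your Gr\"onwall uniqueness step is a genuine addition: the text preceding the theorem asserts uniqueness, but the paper never proves it.

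The more important point is that the obstacle you single out is not a loose end in your write-up; it is a gap in the paper's own proof. When the paper computes $\frac{\partial f}{\partial t}$, it differentiates the series $\sum_{k\geq 1}\frac{1}{k!}\bigl(-\int_0^t A(s)\,ds\bigr)^k$ term by term and collapses the result to $-\exp\bigl(-\int_0^t A(s)\,ds\bigr)A(t)$. Writing $B(t)\triangleq\int_0^t A(s)\,ds$, the derivative of $B^k$ is $\sum_{j=0}^{k-1}B^{j}B'B^{k-1-j}$, and replacing this by $kB^{k-1}B'$ is precisely the commutation hypothesis $[A(t),B(t)]=0$ that you flagged. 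For the companion form \eqref{eq:ou_matrix_tdep} this hypothesis forces constancy: equating the third rows of $A(t)B(t)$ and $B(t)A(t)$ gives $\int_0^t\alpha_i(s)\,ds=t\,\alpha_i(t)$ for each $i$, hence $\alpha_i'\equiv 0$. So in the very regime the theorem is meant to serve --- the $12$-level monthly step function of Sect.\,\ref{sec:mean_reversion} --- commutativity fails and \eqref{eq:solution_OU_4} is not the solution of \eqref{eq:CAR_t_dep_speed}. Concretely, for the homogeneous equation with $A\equiv A_1$ on $[0,1)$ and $A\equiv A_2$ on $[1,2]$, the true flow at $t=2$ is $e^{A_2}e^{A_1}$ while \eqref{eq:solution_OU_4} gives $e^{A_1+A_2}$; these differ whenever $[A_1,A_2]\neq 0$, and two distinct matrices of the form \eqref{eq:ou_matrix_tdep} never commute (compare third rows of the two products again).

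Consequently your first proposed remedy (assume the family $\{A(s)\}_s$ commutes) is vacuous here, but your second remedy is the correct repair: define $\Phi$ as the time-ordered (product) exponential --- for piecewise-constant $A$, the concatenation $\Phi(t)=e^{A_k(t-t_{k-1})}\cdots e^{A_2(t_2-t_1)}e^{A_1t_1}$ --- which does satisfy $d\Phi(t)=A(t)\Phi(t)\,dt$, is continuous, of finite variation and invertible. With this $\Phi$, your product-rule verification and Gr\"onwall argument go through verbatim and yield the correct statement $\boldsymbol{X}(t)=\Phi(t)\boldsymbol{x}+\Phi(t)\int_0^t\Phi(s)^{-1}\boldsymbol{e}_4\sigma(s-)\,dL(s)$. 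Thm.\,\ref{thm:vectorial_OU_4} and its proof need to be restated in exactly this form; your proposal, completed along your second route, accomplishes that, whereas the paper's argument does not.
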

\begin{proof}
Rewrite the SDE in Eq.\,\eqref{eq:CAR_t_dep_speed} by use of the It{\^o}-L{\'e}vy decomposition, to find that 
\begin{align*}
    d\boldsymbol{X}(t) =& A(t)\boldsymbol{X}(t)dt + \boldsymbol{e}_4\sigma(t-)\left(  a dt + bdB(t) + \int_{\mathbb{R}}z\boldsymbol{\bar{N}}(dt,dz) \right)\\
    =& \left(A(t)\boldsymbol{X}(t) + \boldsymbol{e}_4a\sigma(t-) \right)dt + \boldsymbol{e}_4b\sigma(t-)dB(t) + \boldsymbol{e}_4\sigma(t-)\int_{\mathbb{R}}z\boldsymbol{\bar{N}}(dt,dz).
\end{align*}
By definition, $\boldsymbol{X}(t)$ is a multidimensional It\^o-L{\'e}vy process. Apply the multidimensional It\^o formula on $\displaystyle d\boldsymbol{Y}(t)\triangleq d\left( \exp\left(-\int_0^t A(s)ds\right) \boldsymbol{X}(t) \right)$. By defining $Y(t)\triangleq f(t,\boldsymbol{X}(t))$, it is found by the dominated convergence theorem and the fundamental theorem of calculus that 
\begin{align*}
    \frac{\partial f}{\partial t}(t,\boldsymbol{X}(t)) =& \frac{\partial}{\partial t}\sum_{k=1}^{\infty}\frac{1}{k!}\left(-\int_0^t A(s) ds\right)^k\boldsymbol{X}(t)\\
    =& -\sum_{k=1}^{\infty}\frac{1}{(k-1)!} \left(-\int_0^t A(s) ds\right)^{k-1}A(t)\boldsymbol{X}(t)\\
    =& -\exp\left(-\int_0^t A(s)ds\right)A(t)\boldsymbol{X}(t).
\end{align*}
Furthermore, note that 
\begin{align*}
    \frac{\partial f}{\partial x_i}(t,\boldsymbol{X}(t)) = \exp\left(-\int_0^t A(s)ds\right)\boldsymbol{e}_i\quad\text{and}\quad \frac{\partial^2 f}{\partial x_i\partial x_j}(t,\boldsymbol{X}(t)) = 0
\end{align*}
for all $i$ and $i,j$ respectively. The remaining terms coming from the It{\^o} formula are trivial. Thus, one finds that
\begin{align*}
    d\boldsymbol{Y}(t) =& -\exp\left(-\int_0^t A(s)ds\right)A(t)\boldsymbol{X}(t) dt +\exp\left(-\int_0^t A(s)ds\right)\left( X_2(t)\boldsymbol{e}_1 + X_3(t)\boldsymbol{e}_2 + X_4(t)\boldsymbol{e}_3 \right.\\
    &\quad\quad\left. +(-\alpha_4(t)X_1(t) - \alpha_3(t)X_2(t) - \alpha_2(t)X_3(t) - \alpha_1(t)X_4(t) + a\sigma(t-))\boldsymbol{e}_4\right)dt\\
    &+\exp\left(-\int_0^t A(s)ds\right)\boldsymbol{e}_4b\sigma(t-)dB(t)\\
    &+ \exp\left(-\int_0^t A(s)ds\right)\int_{\abs{z}<R}\left\{ \boldsymbol{X}(t-)+\boldsymbol{e}_4\sigma(t-) z - \boldsymbol{X}(t-) - \boldsymbol{e}_4\sigma(t-) z\right\}\nu(dz)dt\\
    &+ \exp\left(-\int_0^t A(s)ds\right)\int_{\mathbb{R}}\left\{ \boldsymbol{X}(t-) + \boldsymbol{e}_4\sigma(t-)z - \boldsymbol{X}(t-)  \right\}\boldsymbol{\bar{N}}(dt,dz)\\
    =& \exp\left(-\int_0^t A(s)ds\right)\left(A(t)\boldsymbol{X}(t) - A(t)\boldsymbol{X}(t)+ \boldsymbol{e}_4\sigma(t-)\left( adt + bdB(t) + \int_{\mathbb{R}}z\boldsymbol{\bar{N}}(dt,dz)  \right)\right)\\
    =& \exp\left(-\int_0^t A(s)ds\right)\boldsymbol{e}_4\sigma(t-)dL(t).\\\end{align*}
 Hence, from the definition of $Y(t)$, when $\boldsymbol{x}\triangleq \boldsymbol{X}(0)$,
 \begin{align*}
\boldsymbol{X}(t) =& \exp\left(\int_0^t A(s)ds\right)\boldsymbol{x} + \exp\left(\int_0^t A(s)ds\right)\int_0^t \exp\left(-\int_0^s A(u)du\right)\boldsymbol{e}_4\sigma(s-)dL(s).
\end{align*}
\end{proof}

%%%%%%%%%%%%%%%%%%%%%%%%%%%%%%%%%%%%%
%%%%%%%%%%%%%%%%%%%%%%%%%%%%%%%%%%%%%

\section{\label{sect:main_analysis_ch}Stochastic modeling of daily-zonal mean stratospheric temperature}
The aim of the following sections is to fit a CAR model to the daily-zonal mean stratospheric temperature data obtained from the ECMWF ERA-Interim reanalysis product. 

%%%%%%%%%%%%%%%%%%%%%%%%%%%%%%%%%%%%%
%%%%%%%%%%%%%%%%%%%%%%%%%%%%%%%%%%%%%

\subsection{Methodology for deriving and fitting a stochastic model to stratospheric temperature data\label{sect:methodology}}
This section describes the data analysis applied in Sects.\,\ref{sec:season_function_data}-\ref{sect:residuals_analysis} to fit the model in Eq.\,\eqref{eq:general_temp_model} to ERA-Interim stratospheric temperature reanalysis data (described in \cite{ecmwf_stratospheric_temp} and specified in Sect.\,\ref{sec:strat_temp_data_retrieval}). Applying this methodology shows that the model in Eq.\,\eqref{eq:general_temp_model} is suitable to model stratospheric temperature when $Y(t)$ is a non-Gaussian CAR($4$) process. 

Assume that a dataset of stratospheric temperatures indexed by time is given, and denote this by $\mathcal{S}$. A detailed description of the stratospheric temperature dataset used in this paper will be given in Sect.\,\ref{sec:strat_temp_data_retrieval}. The main steps of the data analysis of $\mathcal{S}$ are:
\begin{enumerate}
    \item Fit a deterministic continuous seasonality function $\Lambda(t)$ to $\mathcal{S}$. Subtract $\Lambda(t)$ from $\mathcal{S}$ to obtain a dataset of deseasonalized stratospheric temperatures, denoted $\mathcal{S}_d$.
    \item Fit an AR($p$) model to $\mathcal{S}_d$ with the choice of $p$ based on the PACF of the dataset. Subtract the fitted AR($p$) model from $\mathcal{S}_d$ to obtain a dataset of residuals, $\mathcal{E}$.
    \item Compute the empirical expected values of the squared residuals each day over the year (assumed to be $365$ days) to construct an approximation of the time-varying volatility function, $\sigma(t)$.
    \item Divide $\mathcal{E}$ by $\sigma(t)$ to obtain a dataset of $\sigma(t)$-scaled residuals, denoted $\hat{\mathcal{E}}$. %
    Find the probability distribution of the elements in $\hat{\mathcal{E}}$ (by statistical analysis).
\end{enumerate}
As the goal of this work is to obtain a continuous time stochastic model for stratospheric temperature, notation corresponding to continuous functions will be used in the more detailed explanation that follows.

Assume that the stratospheric temperature, $S(t)$, is given by Eq.\,\eqref{eq:general_temp_model}, $Y(t)$ being a CAR($p$) process as in Eq.\,\eqref{eq:strat_temp_model}. The lag $p=4$ is chosen based on observations in Sect.\,\ref{sec:autoreg_behav}, meaning that the direct memory effect reaches over four days. Then, by the transformation relation between CAR($4$) and AR($4$) processes in Eq.\,\eqref{eq:ar_car_connection}, the deseasonalized temperature is given by
\begin{align}
\label{eq:ds_strat_temp_formula}
X(t) =	S(t) - \Lambda(t) = \sum_{k=1}^4 \beta_k X(t-k) + e(t).
\end{align}
The seasonality function $\Lambda(t)$ is fit by least squares to simulate the seasonal behavior of the stratospheric temperature data in $\mathcal{S}$, and then subtracted from the stratospheric temperature data to find the discrete version of deseasonalized stratospheric temperature. %
The deseasonalized temperature is given by the AR($4$) process $X(t)$ in Eq.\,\eqref{eq:ds_strat_temp_formula}, with random error terms (residuals) $e(t)$. Therefore, by use of least squares, an AR($4$) model is fit to the deseasonalized stratospheric temperature data in $\mathcal{S}_d$, and then subtracted to find the residuals dataset $\mathcal{E}$. Mathematically, the residuals are given by
\begin{align}
\label{eq:residuals_formula}
e(t)= X(t)-\sum_{k=1}^4 \beta_k X(t-k).
\end{align} 
In Sect.\,\ref{sect:residuals_analysis}, yearly heteroskedasticity is observed in the squared residuals. This means that the daily variance values (over the year) of the dataset $\mathcal{E}$ are time-dependent. Therefore, a time-varying volatility function $\sigma(t)$ is approximated and divided on the residuals in $\mathcal{E}$ to obtain the $\sigma(t)$-scaled residuals
\begin{align}
\label{eq:deseasonalized_residuals}
\epsilon(t) = \frac{e(t)}{\sigma(t)}.
\end{align}
That is, $\epsilon(t)$ represents the data in $\hat{\mathcal{E}}$. Recall from Sect.\,\ref{sect:car_ar_connection} (Eq.\,\eqref{eq:ar(p)-process}), that $\epsilon(t)$ are i.i.d.\ random variables. The mean value of residuals each day $d$ during the year, $d\in[1,365]$ (see Sect.\,\ref{sec:strat_temp_data_retrieval}), is assumed to be constant. Therefore, the variance each day during the year is given by
\begin{align}
\label{eq:daily_variance}
\text{Var}(e_d(t)) = \left(E[e_d^2(t)] - E[e_d(t)]^2\right),
\end{align}
where $E[e_d(t)]$ represents the empirical mean value of residuals at day $d$. The magnitude of $E[e_d(t)]^2$ is insignificant compared to $E[e_d^2(t)]$, see Fig.\,\ref{fig:compare_var} for an illustration of this. Hence, the approximation 
\begin{align}
    \label{eq:daily_variance_estimate}
    \text{Var}(e_d(t))\simeq E[e_d^2(t)]
\end{align}
is used to fit an appropriate time-varying variance function, $V(t)$, for $t\in[1,365]$. See Sect.\,\ref{sect:residuals_analysis} for a thorough explanation of how to compute $E[e_d^2(t)]$ empirically.
When $E[e_d^2(t)]$ is computed for each $d$, the function $V(t)$ is fit to the values by use of three heavily truncated Fourier series (by least squares), which are connected by two sigmoid functions. The time-varying volatility function is finally computed as $\sigma(t) = \sqrt{V(t)}$.
When $\hat{\mathcal{E}}$ is obtained, an appropriate probability density function (pdf) describing the distribution of the $\sigma(t)$-scaled residuals has to be found. Finally, the last step is to introduce a stochastic process which is able to replicate the behaviour of the particular pdf. The $\sigma(t)$-scaled residuals function $\epsilon(t)$ is represented by this stochastic process in the CAR model. 

From this analysis, an appropriate driving stochastic process for the CAR($4$) model is obtained. However, at this point (due to doing time series analysis) the model is given by Eq.\,\eqref{eq:ds_strat_temp_formula}, and is thus a discrete model. The CAR($4$) model, which is given in Eq.\,\eqref{eq:strat_temp_model}, is found by applying the transformation relation in Eq.\,\eqref{eq:ar_car_connection}. A detailed description of this data analysis methodology applied to the ERA-Interim stratospheric temperature reanalysis data (see Sect.\,\ref{sec:strat_temp_data_retrieval}) is given in Sects.\,\ref{sec:season_function_data}-\ref{sect:residuals_analysis}.
\begin{figure}[ht!]
	\centering
    \includegraphics[scale=0.6]{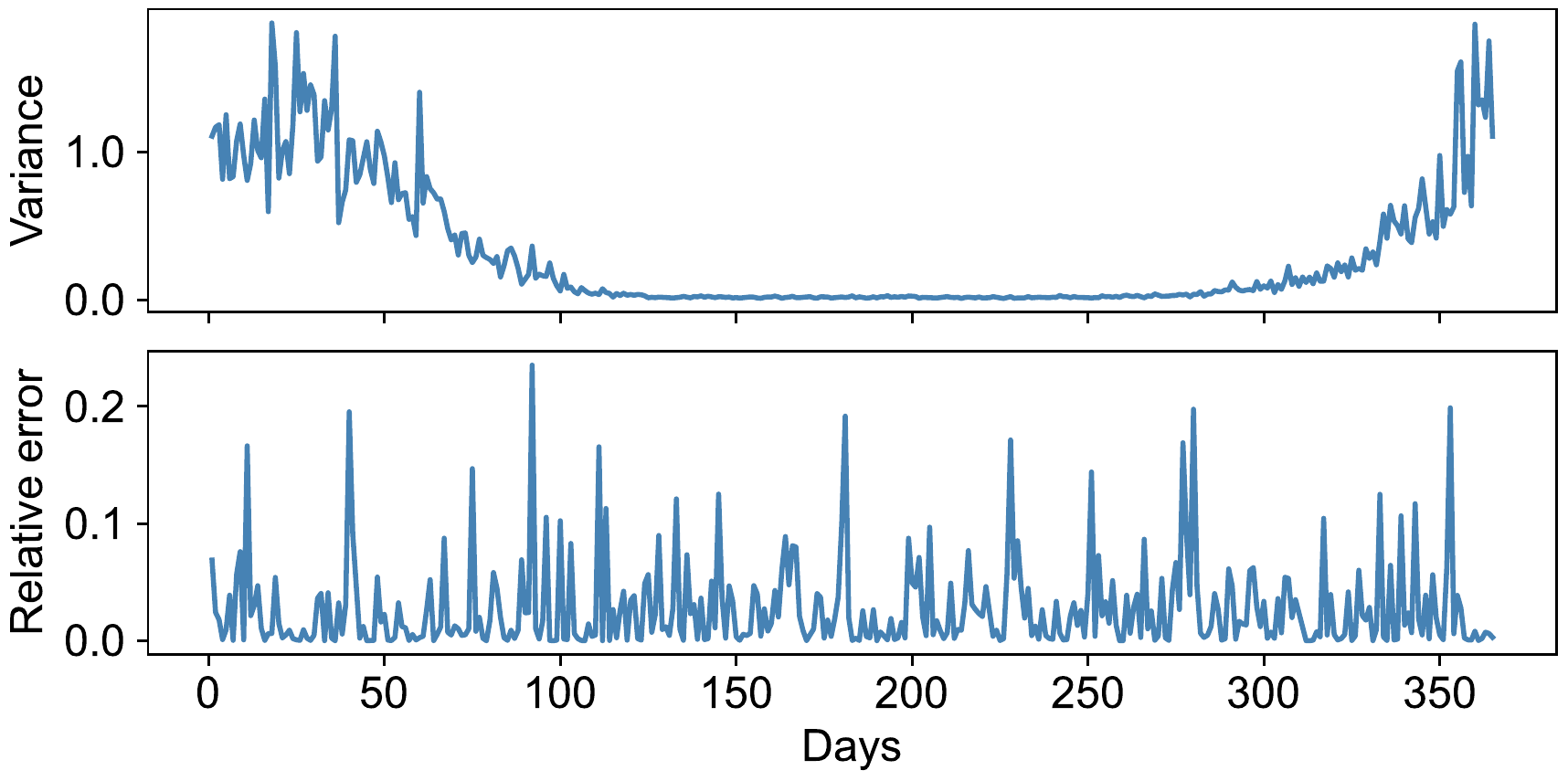}
    \caption{Variance each day of the year computed with the approximation in Eq.\,\eqref{eq:daily_variance_estimate}, and the corresponding relative error when compared with the definition of variance (Eq.\,\eqref{eq:daily_variance}). The mean absolute percentage error is $2.8$\%} 
    \label{fig:compare_var}
\end{figure}

%%%%%%%%%%%%%%%%%%%%%%%%%%%%%%%%%%%%%
%%%%%%%%%%%%%%%%%%%%%%%%%%%%%%%%%%%%%

\subsection{\label{sec:strat_temp_data_retrieval}The zonal mean stratospheric temperature dataset}
The aim of this section is to describe the stratospheric temperature dataset analyzed in the remaining of this paper.

Define a spherical coordinate system such that $(r_{\text{tot}},\theta,\varphi)$ represents a point in the atmosphere. Let $r_{\text{tot}}=r_{E}+r$ represent the altitude from the center of the Earth, where $r_E$ is the radius of the Earth and $r$ is the distance from Earth's surface to the atmospheric point of interest. Further, $\theta$ represents the longitude and $\varphi$ the latitude. With the presented notation, the region of interest in this paper can be defined as 
\begin{align*}
    R \triangleq\big\{(r_{\text{tot}},\theta,\varphi) | r=10\;\text{hPa} \simeq 30\;\text{km}, \theta\in[-180^{\circ}\text{E},180^{\circ}\text{E}),\varphi=60^{\circ}\text{N}\big\}.
\end{align*}
The region $R$ is an area bounded by a circumpolar line in the extra-tropical stratosphere. The pressure level $10$\;hPa corresponds roughly to $30$\;km altitude. Stratospheric dynamics in this region are highly variable, and they depend on the state of the stratospheric polar vortex. 

Enhanced probing and representation of the stratosphere in atmospheric models and numerical weather prediction systems has potential to enhance surface weather predictions on weekly to monthly timescales (e.g.,  \cite{Pedatella2018}, \cite{KarpechkoWMO2016} and \cite{butler2019chapter}). Maybe the most striking example of stratospheric influence on the surface is the extreme event of sudden stratospheric warmings (SSWs), where an abrupt disruption in the stratospheric winter circulation occurs, accompanied by a stratospheric temperature increase of several tens of degrees. SSWs are detectable in the region $R$. Through stratosphere-troposphere coupling, the effects of SSWs can extend to the troposphere, with increased probability of shifts in the jet stream and storm tracks, further affecting the expected precipitation and surface temperatures. This phenomenon can, for example, be manifested as harsher winter weather regimes on continental North America and Eurasia, see \cite{baldwin2021sudden}. This may have impact on several sectors in society and industry.

With the purpose of deriving a stochastic stratospheric temperature model, a dataset $\mathcal{D}$ from the  ERA-Interim reanalysis model product is retrieved from ECMWF (see \cite{ecmwf_stratospheric_temp}), such that each temporal data point $d_i\in\mathcal{D}$ represents the spatial mean stratospheric temperature over $R$. As the spatial mean is taken over the full circumpolar interval, this is denoted as the zonal mean. The zonal mean properties of the stratosphere at the $10$\;hPa pressure level is commonly considered in stratospheric diagnostics, and when studying stratospheric events like SSWs and beyond \citep{butler2015defining}. The subscript $i$ represents a measurement every six hours from midnight, and the zonal mean is taken over $\theta$ at a $0.5^{\circ}$ spacing. 
That is, $\mathcal{D}$ contains four zonal mean temperature measurements every day within the interval $T\in[1\text{ January }1979,31\text{ December }2018]$. For computational convenience, all data from 29 February each leap year are excluded from $\mathcal{D}$, such that the length of each year is constant. %
All stated specifications of $\mathcal{D}$ are collected in Tab.\,\ref{tab:strat_temp_specifications}. Further, define the dataset 
\begin{align*}
    \mathcal{S} \triangleq\big\{S_k: S_k= E[d  | D_k]\big\},
\end{align*}
where $D_k$ are subsets of $\mathcal{D}$ containing four data points every given day $k$ in the time interval $T$ (except days 29 February), and $E[d | D_k]$ is the empirical mean. 
That is, $\mathcal{S}$ contains daily-zonal mean stratospheric temperatures over the region $R$ for days in the time interval $T$. This is the time series analyzed in the remaining of this paper. As 29 February is excluded each leap year, the total number of data points in $\mathcal{S}$ is $14,600$. A plot of $\mathcal{S}$ for the last ten years (from 1 January 2009 to 31 December 2018) with a fitted seasonality function was shown in Fig.\,\ref{fig:temp_10y_with_seasonfit}.
\begin{table}[ht!]
    \centering
    \caption{Specifications of the stratospheric temperature dataset $\mathcal{D}$ \label{tab:strat_temp_specifications}}
  \begin{tabular}{ c c c c c c }
  \toprule
  Date & Grid & Pressure level & Time & Area & Unit\\
  \midrule
\makecell{1 January 1979 to\\31 December 2018} \hspace{1mm} & \hspace{1mm} $0.5^{\circ}$ \hspace{1mm} & \hspace{1mm} $10$\;hPa \hspace{1mm} & \hspace{1mm} \makecell{00:00, 06:00,\\12:00, 18:00} \hspace{1mm} & \hspace{1mm} \makecell{$60^{\circ}$N and\\$[-180^{\circ}\text{E},180^{\circ}\text{E})$}  \hspace{1mm}& \hspace{1mm} Kelvin\\
  \bottomrule
\end{tabular}
\end{table}

%%%%%%%%%%%%%%%%%%%%%%%%%%%%%%%%%%%%%
%%%%%%%%%%%%%%%%%%%%%%%%%%%%%%%%%%%%%

\subsection{\label{sec:season_function_data}Fitting a seasonality function to stratospheric temperature data}
In this section, seasonality in stratospheric temperature data is analyzed. Seasonality in this setting means (deterministic) periodically repetitive patterns of temperature dynamics over time. A deterministic seasonality function will be fit to the dataset $\mathcal{S}$, with the aim of further analyzing deseasonalized temperature data where these periodically repetitive patterns are removed. 
Although the stratosphere is typically characterized by variations on longer timescales than the troposphere, there are oscillation, or atmospheric tide, patterns present at these altitudes as well. In addition to the directly forced cycles causing seasonal effects, for example, the phenomenon of quasi-biennial oscillations (QBO) is a nearly periodic phenomenon in the stratosphere. The QBO period is variable, but averages to about $28$ months. This is a phenomenon occurring in the equatorial stratosphere. Still, the QBO can affect stratospheric conditions from pole to pole, and even has effects on the breaking of wintertime polar vortices, leading to SSWs, see \cite{vallis2017} and \cite{baldwin2001}. 

Seasonal effects complicate stochastic modeling because they cause non-stationarity. In the current study, daily-zonal mean stratospheric temperatures over $40$ years are considered, meaning that the periodic phenomena of interest are the yearly cycle and the QBO. Non-stationarity can also result from long-term effects of greenhouse gases and ozone, anthropogenic forcings that cause a stratospheric cooling trend, see \cite{cnossen2015}, \cite{danilov20} and \cite{steiner20}. The first step in deriving a stochastic stratospheric temperature model is to fit a seasonality function to the data, and then to subtract this to remove the non-stationary effects. 

As the yearly cycle is the most pronounced phenomenon (Figs.\,\ref{fig:temp_10y_with_seasonfit} and \ref{fig:acf_strat_temp}), a Fourier series with a period of $365$ days is chosen as seasonality function. Further, the long-term decreasing trend in stratospheric temperature is approximately linear,
meaning that a linear function should be present in the seasonality function as well. Based on these considerations, a seasonality function $\Lambda(t)$ is defined as in Eq.\,\eqref{eq:lambda_def}. 

In the following, the continuous version of daily-zonal mean stratospheric temperature data $S_i\in\mathcal{S}$ (as described in Sect.\,\ref{sec:strat_temp_data_retrieval}) is denoted by $S(t)$, where $t\in\mathbb{R}^{+}$. Let $S(t)$ be given by the stochastic model in Eq.\,\eqref{eq:general_temp_model} with seasonality function 
\begin{align}
    \label{eq:lambda_def}
    \Lambda(t) = c_0 + c_1t + \sum_{k=1}^{n} \left( c_{2k}\cos(k\pi t/365) + c_{2k + 1}\sin(k \pi t/365) \right),
\end{align}
where $c_0,c_1,c_{2},\ldots ,c_{2n+1}$ are constants. The choice of $\Lambda(t)$ is made based on the discussion above, where $c_1$ captures the slope of the long-term cooling of the stratosphere (corresponding to global warming of the troposphere), and where the constant term $c_0$ represents the average level at the beginning of the time series $\mathcal{S}$. 
The constants $c_2,\ldots ,c_{2k+1}$ describe the yearly cycle as weights in the truncated Fourier series. 

The seasonality function $\Lambda (t)$ is fit to the time series in $\mathcal{S}$ with $n=10$ (using least squares), with resulting parameters given in Tab.\,\ref{tab:seasonality_function_parameters}. Fig.\,\ref{fig:temp_10y_with_seasonfit} displays the fitted seasonality function $\Lambda(t)$ together with the ten last years of the times series in $\mathcal{S}$. The value of $c_0$ in Tab.\,\ref{tab:seasonality_function_parameters} indicates that the daily-zonal mean stratospheric temperature over the region $R$ (see Sect.\,\ref{sec:strat_temp_data_retrieval}) was approximately $226.15$\;K ($-47.00^\circ$C) in 1979. The negative value of $c_1$ confirms the long-term cooling effect of the stratosphere. The $c_1$ value found corresponds to a daily-zonal mean stratospheric temperature decrease of approximately $1.05$\;K (equivalent to a change of $1.05^\circ$C) over the last $40$ years at $60^\circ$N and $10$\;hPa. This is consistent with \cite{steiner20}, estimating an overall cooling of the stratosphere of about $1$-$3$\;K over the same time span.
\begin{table}[ht!]
    \centering 
    \caption{Seasonality function, $\Lambda(t)$, parameters (see Eq.\,\eqref{eq:lambda_def}) for daily-zonal mean stratospheric temperature at $60^\circ$N % 
    %($60^{\circ}$N,$60^{\circ}$N,$0^{\circ}$E,$359^{\circ}$E)%
    and $10$\;hPa between 1 January 1979 and 31 December 2018 
    \label{tab:seasonality_function_parameters}}
    \begin{tabular}{c c c c c c c c c c c}
  \toprule
	$c_0$ &   $c_2$  & $c_4$ & $c_6$ & $c_8$ & $c_{10}$ &   $c_{12}$ & $c_{14}$ & $c_{16}$ & $c_{18}$ & $c_{20}$\\
	\midrule
	$226.15$ & $-0.05$ & $-12.09$ & $0.23$ & $1.88$ & $0.33$ & $0.16$ & $0.13$ & $-0.09$ & $-0.15$ & $-0.01$\\
  \bottomrule
  \toprule
  	$c_1$ &   $c_3$  & $c_5$ & $c_7$ & $c_9$ & $c_{11}$ &   $c_{13}$ & $c_{15}$ & $c_{17}$ & $c_{19}$ & $c_{21}$ \\
  	\midrule
	$-0.000072$ & $-0.11$ & $1.63$ & $-0.23$ & $2.81$ & $-0.04$ & $1.54$ & $0.14$ & $0.45$ & $0.05$ & $0.11$\\
  \bottomrule
\end{tabular}
\end{table}

%%%%%%%%%%%%%%%%%%%%%%%%%%%%%%%%%%%%%
%%%%%%%%%%%%%%%%%%%%%%%%%%%%%%%%%%%%%

\subsection{\label{sect:AR-model}Fitting an AR model to deseasonalized stratospheric temperature data}
Having deseasonalized the stratospheric temperature dataset $\mathcal{S}$, the next step is to fit an AR model to the deseasonalized dataset $\mathcal{S}_d$.
Temperature tends to having a mean-reverting property over time, a property that can be modeled by an AR($p$) process \cite{benth2008}. Based on the discussion in Sect. \ref{sec:autoreg_behav}, suppose that the deseasonalized stratospheric temperature $Y(t)=S(t) - \Lambda(t)$ can be modeled by an AR($p$) process as represented in Eq.\,\eqref{eq:ar(p)-process}, where the random error terms $e(t)$ represent the model residuals. The empirical ACF and PACF of the deseasonalized stratospheric temperature data illustrated in Figs.\,\ref{fig:acf_strat_temp_ds} and \ref{fig:pacf_strat_temp_ds} confirm that it is appropriate to model $Y(t)$ by an AR($p$) process, and indicate that $p=4$ is needed to explain the time series evolution, see \cite{Levendis18}. 
An AR($4$) model is fit to the deseasonalized stratospheric temperature data in $\mathcal{S}_d$ by use of least squares. The resulting AR($4$) model parameters are presented in Tab.\,\ref{tab:ar/car4_parameters}. By use of Eq.\,\eqref{eq:ar_car_connection} the corresponding CAR($4$) process is calculated, and the resulting model parameters for this continuous model are presented in Tab.\,\ref{tab:ar/car4_parameters} as well. Based on the reasoning in \cite{benth2008}, preservation of stationarity of the CAR($4$) model depends on the properties of the time-dependent volatility function $\sigma(t)$.
However, as long as all eigenvalues of the matrix $A$ have negative real part, it is ensured that the modeled temperature on average will coincide with the seasonality function $\Lambda(t)$ when time approaches infinity. This is because, as will be shown in Sect.\,\ref{sect:residuals_analysis}, the L{\'e}vy-driven CAR($4$) model generates NIG distributed random variables with mean zero, a property which is preserved for the model in the long run when the eigenvalues have negative real part. The eigenvalue equation 
\begin{align}
\label{eq:eigenvalueeq_ar(4)}
    \lambda^4 + \alpha_1\lambda^3 + \alpha_2\lambda^2 + \alpha_3\lambda +  \alpha_4
\end{align}
has roots $\lambda_{1,2} = -1.01\pm 0.43i$, $\lambda_{3} = -0.36$ and $\lambda_4 = -0.07$, and the stationarity condition is therefore satisfied. 
\begin{table}[ht!]
    \centering
    \caption{AR($4$) model parameters, and parameters of its continuous counterpart, when fitted to daily-zonal mean stratospheric temperature over $60^{\circ}$N and $10$\;hPa in the period 1 January 1979 to 31 December 2018
    \label{tab:ar/car4_parameters}}
    \begin{tabular}{ c c c c c }
  \toprule
  \multirow{-0.5}{*}{AR($4$) parameters} &
    $\beta_1$ & $\beta_2$ & $\beta_3$ & $\beta_4$  \\ \cline{2-5}
  & $1.55$ & $-0.75$ & $0.28$ & $-0.11$\\
  \bottomrule
  \toprule
  \multirow{-0.5}{*}{CAR($4$) parameters} & $\alpha_1$ & $\alpha_2$ & $\alpha_3$ & $\alpha_4$ \\ \cline{2-5}
  & $2.45$ & $2.10$ & $0.56$ & $0.03$\\
  \bottomrule
\end{tabular}
\end{table}

%%%%%%%%%%%%%%%%%%%%%%%%%%%%%%%%%%%%%
%%%%%%%%%%%%%%%%%%%%%%%%%%%%%%%%%%%%%

\subsection{\label{sect:residuals_analysis}Analyzing the residuals}
In this section, the residuals (random error terms) in the dataset $\mathcal{E}$ are analyzed to determine the appropriate stochastic driving process of the CAR($4$) model for stratospheric temperature.

In computing the parameters of the CAR($4$) model in Sect.\,\ref{sect:AR-model}, the deterministic mean-reverting property of the stratospheric temperature is found. A suitable stochastic driving process for the model residuals, corresponding to the random error terms in Eq.\,\eqref{eq:ar(p)-process}, still remains to be found. As derived in Sect.\,\ref{sect:methodology}, the model residuals are given by 
\begin{align}
e(t) = X(t) - \sum_{k=1}^{4}\beta_k X(t-k).
\label{eq:strat_temp_model_residuals}
\end{align}
The approach to find a suitable stochastic driving process is therefore to empirically determine the stratospheric temperature model residual distribution in $\mathcal{E}$. In \cite{Levendis18}, there is a statement that residuals of AR($p$) models approach white noise for larger $p$. Therefore, it is reasonable as a first guess to assume that $e(t)$ is distributed as i.i.d.\ $N(0,1)$. A normal fit is performed on $\mathcal{E}$, however, by Fig.\,\ref{fig:distr_res} it is clear that the data is not normally distributed. Further, Fig.\,\ref{fig:acf_res} indicate a seasonally varying empirical ACF of squared residuals. Since the distributional mean value is close to zero, this is a sign of seasonal heteroskedasticity in the distributional variance (see Sect.\,\ref{sect:methodology}). A similar seasonal pattern in the empirical ACF of squared residuals was observed by \cite{benth2008} in daily average surface temperature data in Sweden.

Still assuming the stratospheric temperature model residuals to be normally distributed random variables, however with a time-varying variance rather than the constant $1$, $e(t)$ is rewritten as 
\begin{align}
\label{eq:res_with_seasonality_func}
    e(t) = \sigma(t)\epsilon(t).
\end{align}
Here, $\epsilon(t)$ are distributed as i.i.d.\ $N(0,1)$, and $\sigma(t)$ is a yearly (see Fig.\,\ref{fig:acf_res}) time-varying deterministic function. To adjust for the heteroskedasticity, the volatility function $\sigma(t)$ must be defined explicitly. %
To do so, the same approach as in \cite{benth2008} is used: Daily residuals over 40 years are organized into $365$ groups, one group for each day of the year. This means that all observations on 1 January are collected into group $1$, all observations on 2 January into group $2$, and so on until all days of all years are grouped together. 
Recall that observations on 29 February were removed each leap year, such that each year contains 365 data points. By computing the empirical mean of the squared residuals in each group, an estimate of the expected squared residual each day of the year is found, corresponding to an estimate of the daily variance, as explained in Sect.\,\ref{sect:methodology} (Eq.\,\eqref{eq:daily_variance_estimate}). %
The resulting 365 estimates of daily variance yields an estimate of the time-varying variance function, $V(t)$, over the year. This is illustrated in Fig.\,\ref{fig:expected_sq_res}. The yearly heteroskedasticity is clearly visible. 
\begin{figure}[ht!]
	\centering
	\begin{subfigure}[ht!]{0.49\textwidth}
	\centering
	\includegraphics[scale=0.6]{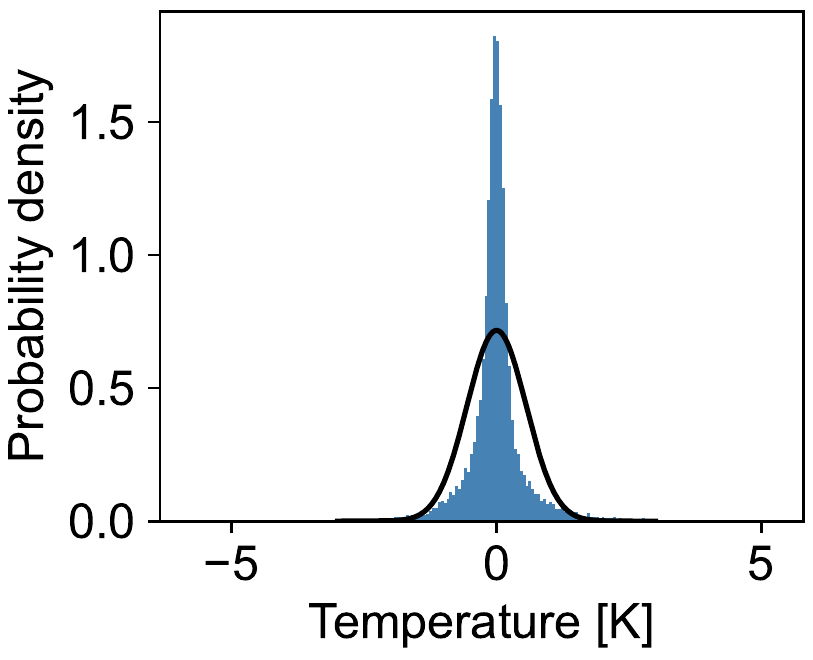}
	\caption{Distribution of $e(t)$}
	\label{fig:distr_res}
	\end{subfigure}
	\begin{subfigure}[ht!]{0.49\textwidth}
	\centering
	\includegraphics[scale=0.6]{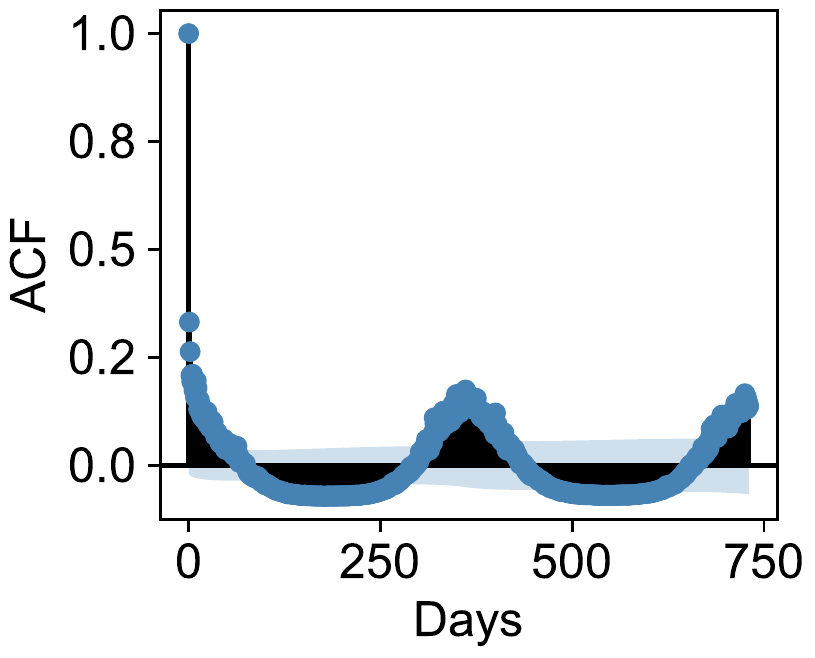}
	\caption{ACF of $e^2(t)$}
	\label{fig:acf_res}
	\end{subfigure}
    \caption{The distribution of stratospheric temperature model residuals $e(t)$ (Eq.\,\eqref{eq:strat_temp_model_residuals}) with a fitted normal distribution, and ACF of the squared model residuals $e^2(t)$}
    \label{fig:distr_res_acf_res}
\end{figure}
Recall that, by definition, the volatility function $\sigma(t)$ is the square root of the time-varying variance function. With the aim of obtaining $\sigma(t)$, an analytic function is fit to the empirically computed expected value of squared residuals, to find a proper function $V(t)$. Fig.\,\ref{fig:expected_sq_res} illustrate that the volatility in the stratospheric temperature variance is much higher in winter time than in summer time, as seen in \cite{haynes2005}. This, as well as the shape of the estimated daily variance (expected squared residuals) over the year, makes function estimation with Fourier series more challenging than simply fitting a single Fourier series. To properly fit a function $V(t)$, the year is split into three parts. Each part represents the winter/spring season, summer season and autumn/winter season, respectively. A local variance test is performed to find appropriate seasonal endpoints. That is, the summer season variance is low and stable compared to the two other seasons, and so the summer season endpoints are set where the local variance hits a given limit, $\delta$. Define the local variance as 
\begin{align*}
v(n) \triangleq \dfrac{1}{2n-1}\displaystyle\sum_{i=0}^{2n-1}(x_i - \mu)^2,
\end{align*}
\noindent where $n$ represents the degree of locality. The number of elements included in the sum is odd, $2n-1$, such that the local variance for each element is based on a symmetric number of neighbours on each side. The test is performed as follows: First, the limit $\delta$ is defined such that mid-summer local variances do not exceed $\delta$. Second, with estimated expected squared residuals given as $[V_1,V_2,\ldots , V_{364},V_{365}]$, the local variance $v(n)_k$ is computed for each point $V_k$ in $[V_{n},V_{n+1},\ldots , V_{364-n},V_{365-n}]$ (each endpoint is cut with $n-1$ elements for computability). Third, an array $K=[k_i \in \{n,n+1,\ldots ,365-n\}:   \mbox{ }v(n)_{k_i}<\delta]$ is constructed (sequentially in time), such that the index (that is, day) of elements with satisfactory small local variance is known. Fourth, based on the array $K$, a collection $\mathcal{K} = \{(k_i-k_{i-1},k_i)\in\mathbb{N}\times\{n,n+1,\ldots , 365-n\}: \mbox{ }\forall \mbox{ } k_i \in K\}$ is constructed for stability purposes. Finally, all pairs in $\mathcal{K}$ where $k_i-k_{i-1}>1$ (day) are printed such that stability of the condition $v(n)_k<\delta$ can be evaluated manually.

The analysis is performed with $n=5$ and $\delta=0.0002$, and gives cutoff at days $115$ and $288$, corresponding to 25 April and 15 October, respectively. For simplicity, the cutoffs are set at the following whole month. That is, the three seasons winter/spring, summer and autumn/winter are defined to be in the intervals $[\text{1 January},\text{30 April}]$, $[\text{1 May},\text{31 October}]$ and $[\text{1 November},\text{31 December}]$, respectively. A function is fit to the estimate of $V(t)$ for each of the three seasons by use of the truncated Fourier series 
\begin{align}
    \label{eq:wf_2sum}
    w_f(t) = d_0 + \sum_{k=1}^{2} \left( d_{2k-1}\cos(fk\pi t/365) + d_{2k}\sin(fk \pi t/365) \right),
\end{align}
where $d_0,...,d_4$ are constants and $f$ is a given parameter adjusting the series frequency. By manual inspection, the function $w_f(t)$ for each of the three seasons are chosen as $w_{0.44(1)}(t)$, $w_{2.0}(t)$ and $w_{0.44(2)}(t)$ respectively, and Tab.\,\ref{tab:res_sq_fitted_parameters} displays the fitted parameters.

The transitions from winter/spring to summer and from summer to autumn/winter should be smooth in order to obtain a smooth yearly time-varying volatility function $\sigma(t):[1,365]\to\mathbb{R}$. This is achieved by connecting the three functions $w_{0.44(1)}(t)$, $w_{2.0}(t)$ and $w_{0.44(2)}(t)$ with two sigmoid functions as in \cite{sigmoid}. The sigmoid function $\omega(x)$ and the connective function $\xi(x)$ are given by 
\begin{align*}
\omega(x) = \frac{1}{1+\exp\left(-(\frac{x-a}{b})\right)} \quad \text{and}\quad  \xi(x) = \big(1-\omega(x)\big)f_1(x) + \omega(x) f_2(x),
\end{align*}
where $a$ and $b$ are shift and scaling constants respectively, and $f_1(x)$ and $f_2(x)$ are two functions that are to be connected. By connecting the functions $w_{0.44(1)}(t)$ and $w_{2.0}(t)$ with $a=120$ and $b=2$, and connecting the functions $w_{2.0}(t)$ and $w_{0.44(2)}(t)$ with $a=304$ and $b=5$, a smooth function $\sigma(t)$ is found. The resulting volatility function is illustrated as $\sigma^2(t)$ (or $V(t)$) in Fig.\,\ref{fig:expected_sq_res}, together with the estimated daily variances during the year.
\begin{table}[ht!]
    \centering
    \caption{Parameters of fitted Fourier series, $w_f(t)$ (see Eq.\;\eqref{eq:wf_2sum}), to each of the three seasons winter/spring, summer and autumn/winter
    \label{tab:res_sq_fitted_parameters}}
    \begin{tabular}{ c c c c c c }
  \toprule
  \multirow{-0.5}{*}{Winter/spring ($f=0.44$)}
    & $d_0$ & $d_1$ & $d_2$ & $d_3$ & $d_4$ \\ \cline{2-6}
  & $-507.58$ & $633.29$ & $269.90$ & $-124.82$ & $-130.75$\\
  \bottomrule
  \toprule
  \multirow{-0.5}{*}{Summer ($f=2.0$)} & $d_0$ & $d_1$ & $d_2$ & $d_3$ & $d_4$ \\ \cline{2-6}
  & $0.092$ & $0.107$ & $0.023$ & $0.034$ & $0.015$\\
  \bottomrule
  \toprule
  \multirow{-0.5}{*}{Autumn/winter ($f=0.44$)} & $d_0$ & $d_1$ & $d_2$ & $d_3$ & $d_4$ \\ \cline{2-6}
  & $13.36$ & $-263.03$ & $86.00$ & $91.21$ & $102.50$\\
  \bottomrule
\end{tabular}
\end{table}
\begin{figure}[ht!]
	\centering
    \includegraphics[scale=0.6]{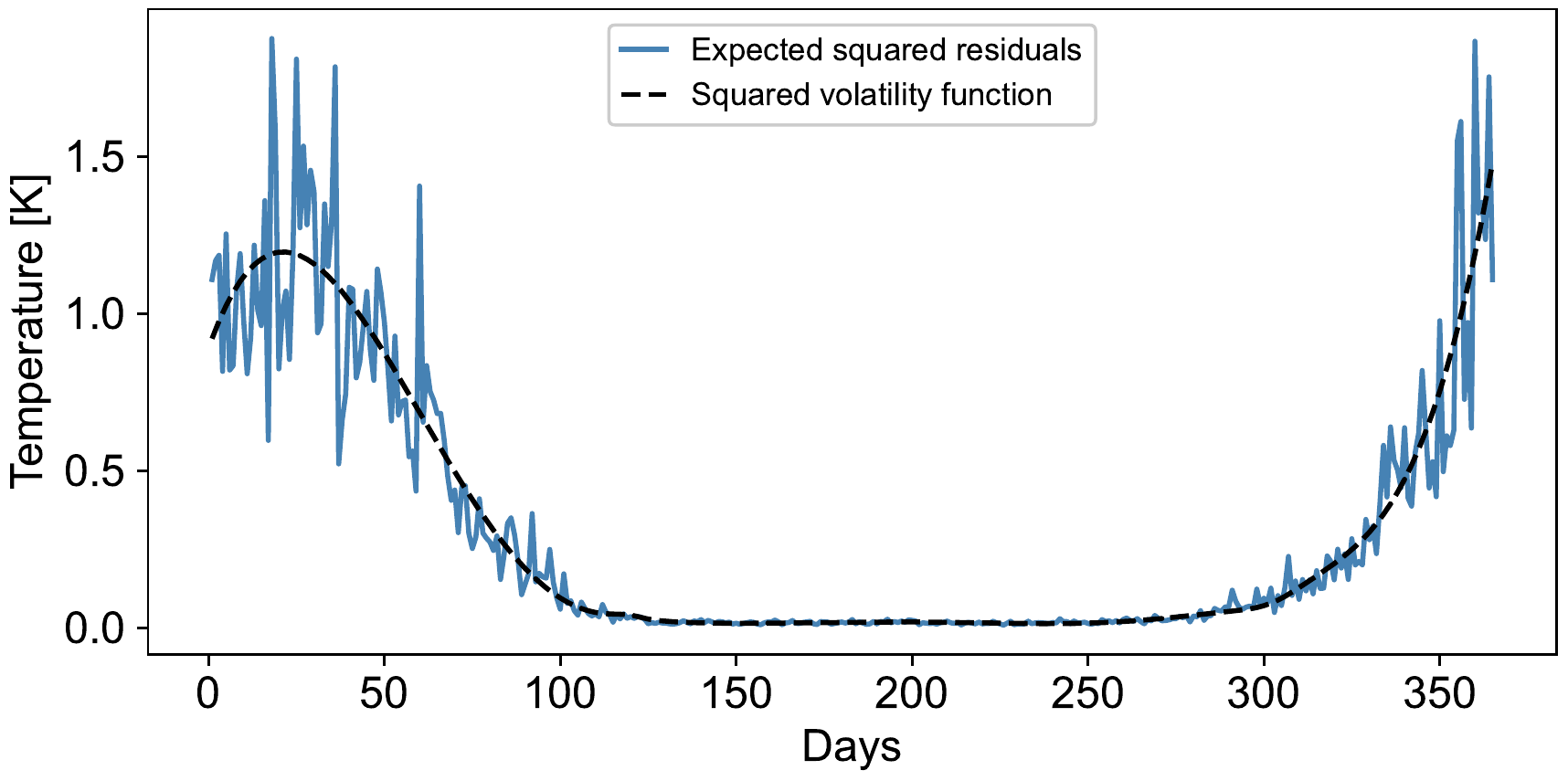}
    \caption{Estimation of expected squared residuals (estimated variance) each day of the year illustrated together with a fitted function}
    \label{fig:expected_sq_res}
\end{figure}
With an explicit expression for the volatility function $\sigma(t)$, the $\sigma(t)$-scaled residuals $\epsilon(t)$ (as described in Eq.\,\eqref{eq:res_with_seasonality_func}) can be studied. The distribution of the $\sigma(t)$-scaled stratospheric temperature model residuals in $\hat{\mathcal{E}}$ is compared to the normal distribution with a QQ-plot in Fig.\,\ref{fig:res_ds_qq}. The above hypothesis about $\epsilon(t)$ being i.i.d.\ $N(0,1)$ random variables does not hold, as the QQ-plot illustrate heavy tails and a slightly skewed distribution. Also, the Kolmogorov-Smirnov (KS) test with statistic $0.027$ and $p$-value $1.57\cdot 10^{-9}$ gives significance for rejecting the hypotheses about $\epsilon(t)$ being standard normal. A fit with the NIG distribution is further performed, and the resulting pdf is illustrated in Fig.\,\ref{fig:res_ds_nig}. The KS test with statistic $0.0064$ and $p$-value $0.57$ does not reject the null-hypothesis that $\epsilon(t)$ represents NIG distributed random variables. Note that the KS-test is meant to provide indicative results, rather than concluding results from a carefully planned statistical experiment. 

\begin{figure}[ht!]
	\centering
	\begin{subfigure}[ht!]{0.49\textwidth}
	\centering	
	\includegraphics[scale=0.6]{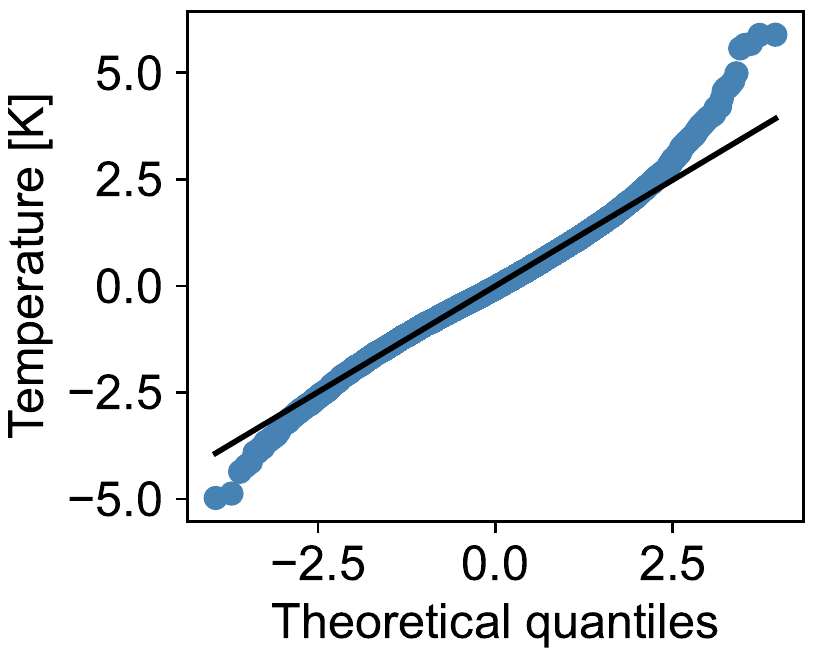}
	\caption{Normal QQ-plot of $\epsilon(t)$}
	\label{fig:res_ds_qq}
	\end{subfigure}
	\begin{subfigure}[ht!]{0.49\textwidth}
	\centering
	\includegraphics[scale=0.6]{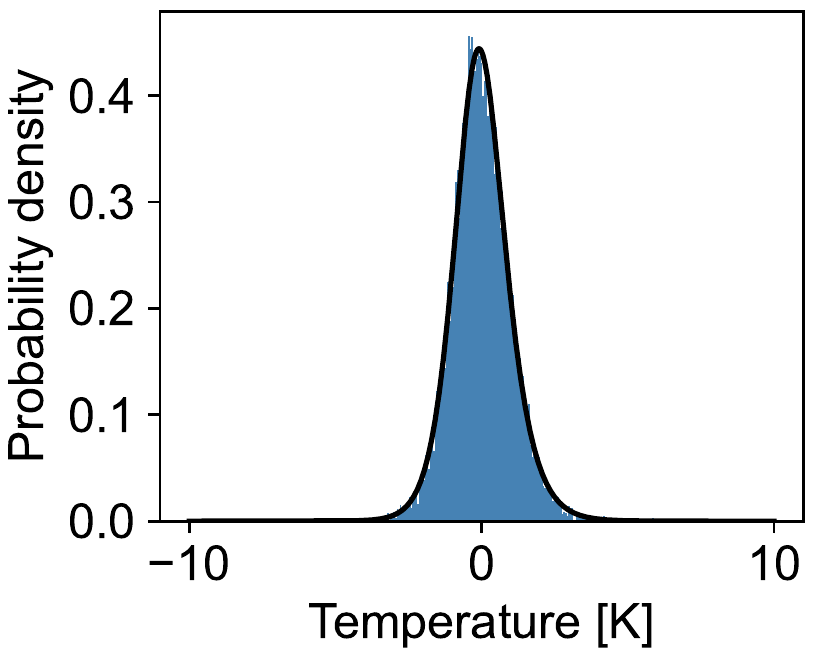}
	\caption{NIG fit of $\epsilon(t)$}
	\label{fig:res_ds_nig}
	\end{subfigure}
    \caption{Fitted distributions to the $\sigma(t)$-scaled stratospheric temperature model residuals $\epsilon(t)$, see Eq.\,\eqref{eq:res_with_seasonality_func}}
    \label{fig:res_ds_fit}
\end{figure}
The result of $\epsilon(t)$ representing NIG distributed random variables supports the hypothesis of using a L{\'e}vy process as the driving process for the stratospheric temperature model, as proposed in Sect.\,\ref{sect:car_ar_connection}. However, as shown in Figs.\,\ref{fig:acf_res_ds} and \ref{fig:pacf_res_ds}, the squared $\sigma(t)$-scaled residuals are partially autocorrelated in approximately $5$ lags, meaning increments of $\epsilon(t)$ fail to be independently distributed. %
These memory effects indicate using a stochastic volatility function as described in \cite{benth09}, rather than a deterministic yearly time-varying volatility function. To generalize the proposed model in Sect.\,\ref{sect:car_ar_connection}, a possibility would be to model $e(t)$ as a normal variance-mean mixture with an inverse Gaussian stochastic volatility, as this process is approximately NIG distributed, see \cite{barndorff-n01} and \cite{benth13}, Sect.\,$4$. %
However, as the memory effects are rather small (except in the first lag), it could be appropriate to assume that there are no significant memory effects in the variance. A possibility is therefore to assume a deterministic volatility function, where the driving process for the stratospheric temperature model is a NIG L{\'e}vy process (e.g.,  \cite{barndorff97-2}, \cite{barndorff97-1} and \cite{barndorff-n01}).
Further studying of this aspect is beyond the scope of the current paper, and is left as a topic for further research.
\begin{figure}[ht!]
	\centering
	\begin{subfigure}[ht!]{0.49\textwidth}
	\centering	
	\includegraphics[scale=0.6]{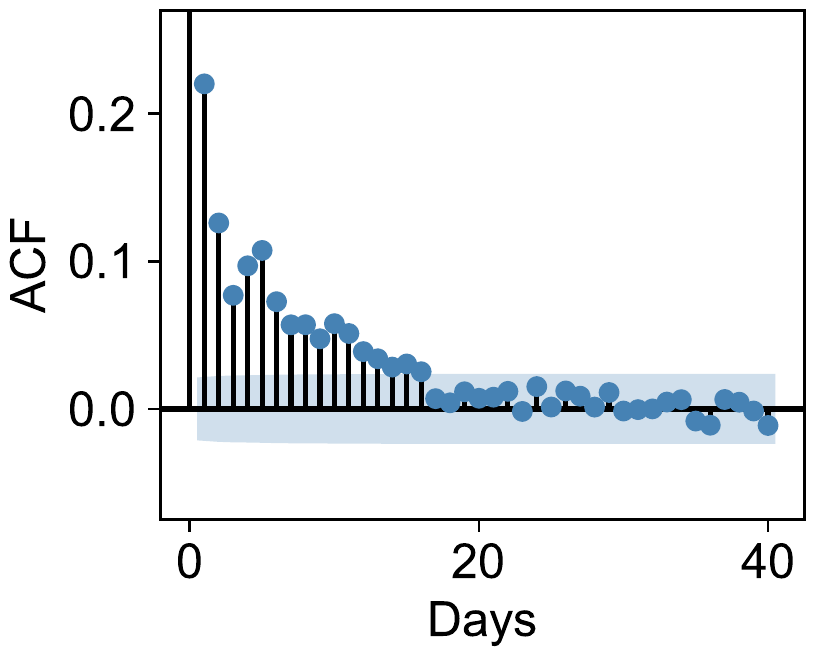}
	\caption{ACF of $\epsilon^2(t)$}
	\label{fig:acf_res_ds}
	\end{subfigure}
	\begin{subfigure}[ht!]{0.49\textwidth}
	\centering
	\includegraphics[scale=0.6]{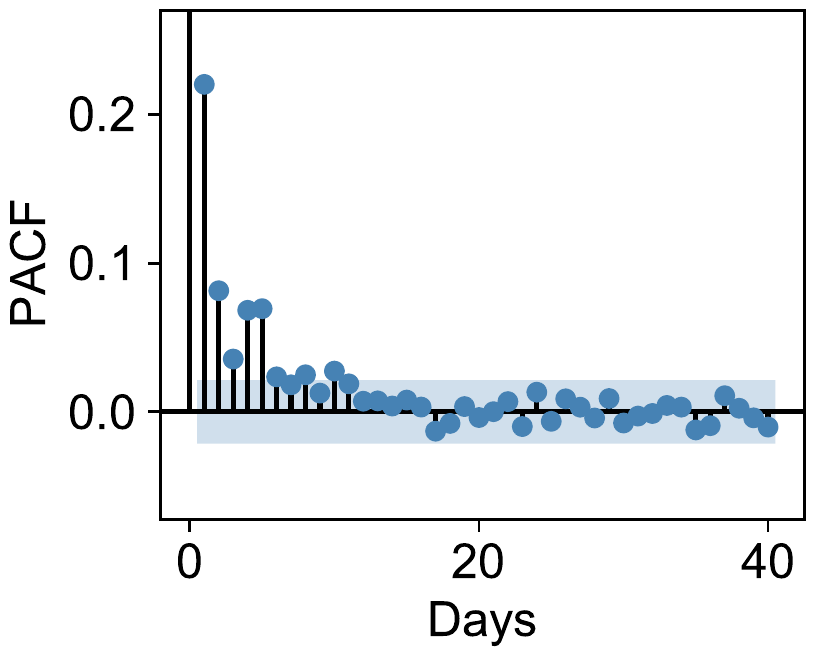}
	\caption{PACF of $\epsilon^2(t)$}
	\label{fig:pacf_res_ds}
	\end{subfigure}
    \caption{ACF and PACF of the squared $\sigma(t)$-scaled stratospheric temperature model residuals $\epsilon^2(t)$, see Eq.\,\eqref{eq:res_with_seasonality_func}}
    \label{fig:acf_pacf_res_ds}
\end{figure}

%%%%%%%%%%%%%%%%%%%%%%%%%%%%%%%%%%%%%
%%%%%%%%%%%%%%%%%%%%%%%%%%%%%%%%%%%%%

\section{\label{sec:mean_reversion}Analyzing the speed of mean reversion}
In this section, it is shown that the assumption of constant speed of mean reversion for stratospheric temperature is erroneous. A generalization of the proposed stratospheric temperature model dynamics in Eq.\,\eqref{eq:car(p)-model}, correcting this erroneous assumption, is presented. More specifically, a special case of the dynamics in Eq.\,\eqref{eq:CAR_t_dep_speed} is proposed as a replacement to drive the stratospheric temperature model in Eq.\,\eqref{eq:strat_temp_model}. 

%%%%%%%%%%%%%%%%%%%%%%%%%%%%%%%%%%%%%
%%%%%%%%%%%%%%%%%%%%%%%%%%%%%%%%%%%%%

\subsection{\label{subsec: mean_reversion_method}Methodology for analyzing speed of mean reversion}

In the previous sections, it was shown that the deseasonalized stratospheric temperature, $Y(t)$, follows a mean-reverting stochastic process. That is, deseasonalized stratospheric temperature dynamics is given by the OU process in Eq.\,\eqref{eq:car(p)-model}. The matrix $A$ holds parameters of speed of mean reversion, meaning that the rate at which the stratospheric temperature reverts back to its long-term mean is given by the elements of $A$, see Sect.\,\ref{sec:autoreg_behav} (Eq.\,\eqref{eq: formula} and Eq.\,\eqref{eq:CAR_speed}). In the previous sections the speed of mean reversion was assumed to be constant, and thus independent of time. To check the validity of this assumption, a similar stability analysis of speed of mean reversion as in \cite{benth05} will be performed in the following. The stability analysis exploits the transformation relation between CAR and AR models (see Sect.\,\ref{sect:car_ar_connection}), meaning that it is applied on computed AR parameters $\beta (t)$. That is: The mean value, $E[\beta(t)]$, and standard deviation, $\sqrt{\text{Var}(\beta(t))}$, of fitted AR($p$) parameters are computed empirically over each available year and month. Based on this, the yearly and monthly variation coefficients are found as 
\begin{align}
    \label{eq:var_coeff}
    \Delta = \frac{\sqrt{\text{Var}(\beta(t))}}{E[\beta(t)]}.
\end{align}
The yearly and monthly variation coefficients reflect the stability of the speed of mean reversion over years and months, respectively. As the stratospheric temperature model is derived with lags in four days, this analysis will be performed with $p=4$ for all computed AR parameters. The methodology for analyzing the yearly stability of speed of mean reversion is as follows: 1; Loop through the $40$ years in $\mathcal{S}_d$ and collect deseasonalized daily-zonal mean stratospheric temperatures $365$ days at a time, such that data for 1 January 1979 to 31 December 1979 are collected in one array, data for 1 January 1980 to 31 December 1980 in one array, and so on until the last array containing data for 1 January 2018 to 31 December 2018. Then, 2; Collect the $40$ arrays containing all data each year in a single array to form the nested array $\boldsymbol{y}$, so $\text{dim}(\boldsymbol{y}) = 1\times 40 \times 365$. 3; Loop through the $40$ arrays in $\boldsymbol{y}$, where an AR($4$) model is fit to each of the years 1979 to 2018. The result is $40$ arrays of AR parameters $(\beta_1^y,\beta_2^y,\beta_3^y,\beta_4^y)$, $y\in\{1,2,\ldots ,40\}$. 4; Collect all $40$ AR parameters corresponding to the same lag in one array and compute the statistics. That is, make the arrays $\boldsymbol{\beta}_1^y = (\beta_1^1, \beta_1^2,\ldots,\beta_1^{40}),\ldots,\boldsymbol{\beta}_4^y = (\beta_4^1, \beta_4^2,\ldots,\beta_4^{40})$, and compute the empirical mean, standard deviation, and finally the variation coefficients defined in Eq.\,\eqref{eq:var_coeff}, for each array $\boldsymbol{\beta}_1^y$, $\boldsymbol{\beta}_2^y$, $\boldsymbol{\beta}_3^y$ and $\boldsymbol{\beta}_4^y$.
The results from this yearly stability analysis are presented in Tab.\,\ref{tab:stab_mean_rev_y_and_m}. Further, the methodology for the monthly stability analysis of speed of mean reversion is: 1; Define one array for each month: $\boldsymbol{Jan},\boldsymbol{Feb},\boldsymbol{Mar},\ldots,\boldsymbol{Dec}$. 2; Loop through the $40$ arrays in $\boldsymbol{y}$ (which is constructed in point 2 for the yearly stability analysis). For all $40$ arrays, collect elements $0$ to $30$ in $\boldsymbol{Jan}$, elements $31$ to $58$ in $\boldsymbol{Feb}$, elements $59$ to $89$ in $\boldsymbol{Mar}$, and so on until you reach elements $334$ to $364$ which are collected in $\boldsymbol{Dec}$. The resulting arrays are nested arrays of the form
    \begin{align*}
        \boldsymbol{Jan} = \left[
        \boldsymbol{Jan}^{1}, \ldots,\boldsymbol{Jan}^{40}
        \right], \boldsymbol{Feb}= \left[\boldsymbol{Feb}^{1}, \ldots, \boldsymbol{Feb}^{40}\right],   \ldots,  \boldsymbol{Dec} = \left[\boldsymbol{Dec}^{1}, \ldots, \boldsymbol{Dec}^{40}\right].
    \end{align*}
Each array has dimension $1 \times 40 \times n$, where $n$ corresponds to the number of days in that particular month. That is, $n=31$ for $\boldsymbol{Jan}$, $n=28$ for $\boldsymbol{Feb}$ and so on. 3; Collect the arrays $\boldsymbol{Jan},\ldots,\boldsymbol{Dec}$ in a nested array $\boldsymbol{m}$: $\text{dim}(\boldsymbol{m}) = 1\times 12 \times 40 \times n$. Loop through each of the $480$ months in $\boldsymbol{m}$, and fit an AR($4$) model to each of the Januaries of the years 1979 to 2018, to each of the Februaries of the years 1979 to 2018, and so on until the last fit is performed on the data of December of 2018. The result is $480$ arrays of AR parameters $(\beta_1^m,\beta_2^m,\beta_3^m,\beta_4^m)$, $m\in\{1,2,\ldots ,480\}$. 4; Make the arrays $\boldsymbol{\beta}_1^m = (\beta_1^1, \beta_1^2,\ldots,\beta_1^{480}),\ldots,\boldsymbol{\beta}_4^m = (\beta_4^1, \beta_4^2,\ldots,\beta_4^{480})$, and for each of them compute the variation coefficient as defined in Eq.\,\eqref{eq:var_coeff}. Further, because of the constructed order of the parameters, the arrays $\boldsymbol{\beta}_1^m$, $\boldsymbol{\beta}_2^m$, $\boldsymbol{\beta}_3^m$ and $\boldsymbol{\beta}_4^m$ can be used to analyze the seasonal behaviour of speed of mean reversion.
The computed results from this monthly stability analysis are presented in Tab.\,\ref{tab:stab_mean_rev_y_and_m}, where the variability coefficients will reveal any monthly instability of speed of mean reversion. With the intention of detecting any monthly seasonal behaviour in the four AR parameters, $\boldsymbol{\beta}_1^m$, $\boldsymbol{\beta}_2^m$, $\boldsymbol{\beta}_3^m$ and $\boldsymbol{\beta}_4^m$ are plotted in Figs.\,\ref{fig:beta_1}-\ref{fig:beta_4}.

%%%%%%%%%%%%%%%%%%%%%%%%%%%%%%%%%%%%%
%%%%%%%%%%%%%%%%%%%%%%%%%%%%%%%%%%%%%

\subsection{\label{subsec: mean_reversion_interpret}Interpretations of the monthly stability analysis}
The monthly variation coefficients are more extreme than the yearly ones. Therefore, the remaining of this section will focus on interpreting results from the monthly stability analysis, as well as to incorporate the observed time-varying behaviour of the speed of mean reversion, into the stratospheric temperature model dynamics in Eq.\,\eqref{eq:CAR_t_dep_speed}.

The magnitudes of the monthly variation coefficients (see Tab.\,\ref{tab:stab_mean_rev_y_and_m}) for all four AR parameters, indicate that the assumption of constant speed of mean reversion in the stratospheric temperature model dynamics (Eq.\,\eqref{eq:car(p)-model}) is insufficient. The monthly variation coefficient, $\Delta$, of the first lag parameter, $\beta_1$, is small compared to $\Delta$ of the three other lag parameters. 
However, as seen in Figs.\,\ref{fig:beta_1}-\ref{fig:beta_4}, the magnitude of $\beta_1$ is up to many times larger than the magnitudes of $\beta_2$, $\beta_3$ and $\beta_4$ (where the magnitudes of $\beta_1$, $\beta_2$, $\beta_3$ and $\beta_4$ are assessed over the contents of $\boldsymbol{\beta}_1^m$, $\boldsymbol{\beta}_2^m$, $\boldsymbol{\beta}_3^m$ and $\boldsymbol{\beta}_4^m$ respectively). %
This means that larger variability in the latter lag parameters affect the estimated stratospheric temperature less. Despite this observation, all variability coefficients are too large to be ignored, meaning that time-dependent AR parameters should be used rather than constant ones. By the transformation relation between CAR($4$) and AR($4$) models in Eq.\,\eqref{eq:ar_car_connection}, it is clear that the stratospheric temperature model dynamics should be given by the OU process in Eq.\,\eqref{eq:CAR_t_dep_speed}, rather than the one in Eq.\,\eqref{eq:car(p)-model}.

It is not only the monthly variability coefficients of the AR parameters that suggest time-varying speed of mean reversion. The sequential patterns of $\boldsymbol{\beta}_1^m$ and $\boldsymbol{\beta}_2^m$ in Figs.\,\ref{fig:beta_1} and \ref{fig:beta_2} clearly show that the AR parameters $\beta_1$ and $\beta_2$ are seasonally varying. Both the first, $\beta_1$, and the second, $\beta_2$, AR parameters are smaller in magnitude in summer time than in winter time. This means that summer time stratospheric temperature is less dependent on the stratospheric temperature the last two days, than winter time stratospheric temperature. This tendency is also (however less) evident for the third, $\beta_3$, and fourth, $\beta_4$, AR parameters in Figs.\,\ref{fig:beta_3} and \ref{fig:beta_4}.
\begin{table}[ht!]
    \caption{Mean value, standard deviation and absolute value of variability coefficient for the four parameters of an AR($4$) process with yearly and monthly varying parameters, respectively}
    \label{tab:stab_mean_rev_y_and_m}
    \centering
    \begin{tabular}{ c c c c c c }
  \toprule
   & Parameter & $\beta_1$ & $\beta_2$ & $\beta_3$ & $\beta_4$\\ \cline{2-6}
  \multirow{2.0}{*}{Yearly} & Mean value & $1.54$ & $-0.74$ & $0.28$ & $-0.12$\\
  & Standard deviation & $0.12$ & $0.22$ & $0.18$ & $ 0.09$\\
  & $\text{abs}(\Delta)$ & $7.8$\% & $29.9$\% & $64.3$\% & $72.6$\%\\
  \bottomrule
  \toprule
  & Parameter & $\beta_1$ & $\beta_2$ & $\beta_3$ & $\beta_4$\\ \cline{2-6}
  \multirow{2.0}{*}{Monthly} & Mean value & $1.29$ & $-0.48$ & $0.19$ & $-0.06$\\
  & Standard deviation & $0.34$ & $0.45$ & $0.32$ & $0.20$\\
  & $\text{abs}(\Delta)$ & $26.6$\% & $94.8$\% & $170.7$\% & $312.9$\%\\
  \bottomrule
\end{tabular}
\end{table}
To the best of our knowledge there is no previous research specifically on the mean reverting property of stratospheric temperature. In \cite{zapranis08}, daily values of speed of mean reversion for surface temperature are estimated by use of a neural network, revealing strong time-dependence. Even though daily variation in speed of mean reversion is not studied in the current paper, a similar conclusion is reached: Speed of mean reversion of stratospheric temperature is dependent on time. However, \cite{zapranis08} found no signs of seasonal patterns, unlike the current study for stratospheric temperature where a clear seasonal pattern is observed in the monthly estimated AR parameters. In response to the observation of time-dependence in speed of mean reversion of surface temperature, \cite{benth15} presented a generalized version of the state-of-the-art stochastic models for surface temperatures applied in mathematical finance (see Sect.\,\ref{sect:car_ar_connection}), where the speed of mean reversion of the driving (standard) OU process is a stochastic process. A simplified version of this generalized model (however L{\'e}vy-driven rather than Brownian motion-driven) is presented in the current paper to incorporate time variability in speed of mean reversion. That is, the matrix $A(t)$ holding parameters of speed of mean reversion is assumed to be time-dependent and deterministic as presented in Thm.\,\ref{thm:vectorial_OU_4}. 

Before presenting an explicit time-dependent and deterministic matrix $A(t)$ representing the speed of mean reversion of stratospheric temperature, its time-varying behaviour will be discussed. As already mentioned, the monthly AR parameters (which the speed of mean reversion depends directly upon) have large variability coefficients, $\Delta$, as well as a seasonal behaviour. From the definition of $\Delta$ in Eq.\,\eqref{eq:var_coeff}, one can see that the seasonal behaviour increases the computed variability coefficients considerably. By removing the seasonal behaviour in the AR parameters (see below) noise is still present, indicating that the speed of mean reversion could be modeled by a stochastic process. However, this is beyond the scope of this paper, and the noise is assumed to be negligible. For this reason, time-dependence in speed of mean reversion is assumed to come solely from the seasonal variations. 

As discussed in Sect.\,\ref{sec:season_function_data}, the only long-term (perfectly) periodic phenomenon in the stratosphere is the yearly cycle. This is clearly seen in the stratospheric temperature data presented in Fig.\,\ref{fig:temp_10y_with_seasonfit}. Further, as seen in Fig.\,\ref{fig:expected_sq_res}, this phenomenon affects the variability, as well as volatility in variability, of stratospheric temperature. Physical explanations of this behaviour are discussed in \cite{haynes2005}, where the winter time stratosphere is said to be more disturbed than the summer time stratosphere. Based on this, together with the above discussion concluding stronger speed of mean reversion in winter time than in summer time, it might be reasonable to assume that the yearly cycle affects the speed of mean reversion of stratospheric temperature as well. Stated in another way: Large values of stratospheric temperature variance seem to generate larger dependence on stratospheric temperature the last couple of days. This is a topic for further research. 

An explicit deterministic matrix $A(t)$ representing speed of mean reversion of stratospheric temperature is proposed in the following. By introducing time-varying AR($4$) parameters such that each month of the year holds fixed parameters, seasonal variability will be adjusted for on a monthly basis. The result of the monthly variation analysis is exploited to define such time varying AR($4$) parameters. That is, define monthly parameter values as (remember that the initial dataset $\mathcal{S}$ contains stratospheric temperature values over $40$ years) 
\begin{align*}
    \beta_{k}^M \triangleq 
    \begin{cases}
    \beta_{k}^1 \\
    \beta_{k}^2 \\
    \vdots \\
    \beta_{k}^{12}
    \end{cases} =
    \begin{cases}
    E[\boldsymbol{\beta}_k^m],\quad \text{for } m\in [1,40] \\
    E[\boldsymbol{\beta}_k^m],\quad \text{for } m\in [41,80] \\
    \quad\vdots \\
    E[\boldsymbol{\beta}_k^m],\quad \text{for } m\in [441,480], 
    \end{cases}
\end{align*}
where $M\in\{1,2,\ldots,12\}$ represents January to December respectively, $k\in\{1,2,3,4\}$ and $E[\cdot]$ represents the empirical mean. The computed values of $\beta_1^M$, $\beta_2^M$, $\beta_3^M$ and $\beta_4^M$ are marked in Figs.\,\ref{fig:beta_1}-\ref{fig:beta_4}, respectively. Building on the theory in Sect.\,\ref{sect:car_ar_connection}, the corresponding CAR($4$) model is given by the multidimensional OU process in Eq.\,\eqref{eq:CAR_t_dep_speed} (Thm.\,\ref{thm:vectorial_OU_4}). Time-dependence in the functions $\alpha_1(t), \ldots \alpha_4(t)$ does not matter for the transformation relation between CAR and AR models as long as the discretization scheme is chosen properly, see \cite{iacus08} and \cite{benth2008}. That is, in this specific case, the discretization scheme has to be constructed such that the two time points which define the current scheme time step, never belongs to two different months. Hence, the continuous parameter counterparts, $\alpha_k(t)$, of the $\beta_k^M$'s can be computed by the transformation relation in Eq.\,\eqref{eq:ar_car_connection}. The $\alpha_k(t)$'s can be considered as $12$ level step functions, where each step represents a month of the year. The roots of the eigenvalue equation (Eq.\,\eqref{eq:eigenvalueeq_ar(4)}) of each of the $12$ steps in each of the functions $\alpha_k(t)$ are computed, and their real parts are shown in Fig.\,\ref{fig:roots}. As all the roots have negative real part the stationarity condition of CAR processes is secured. 
\begin{figure}[ht!]
	\centering
	\begin{subfigure}[ht!]{0.49\textwidth}
	\centering	
	\includegraphics[scale=0.6]{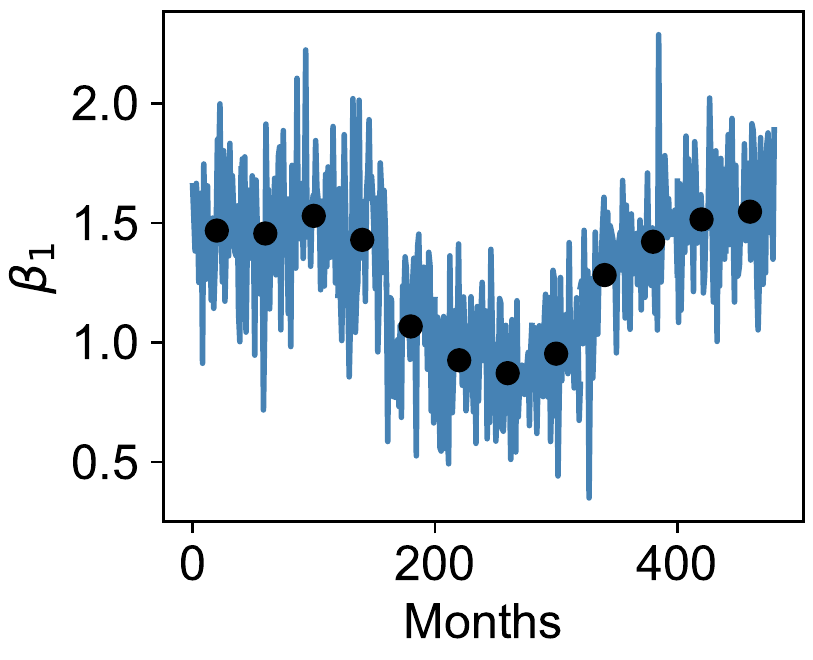}
	\caption{Monthly variability in $\beta_1$}
	\label{fig:beta_1}
	\end{subfigure}
	\begin{subfigure}[ht!]{0.49\textwidth}
	\centering
	\includegraphics[scale=0.6]{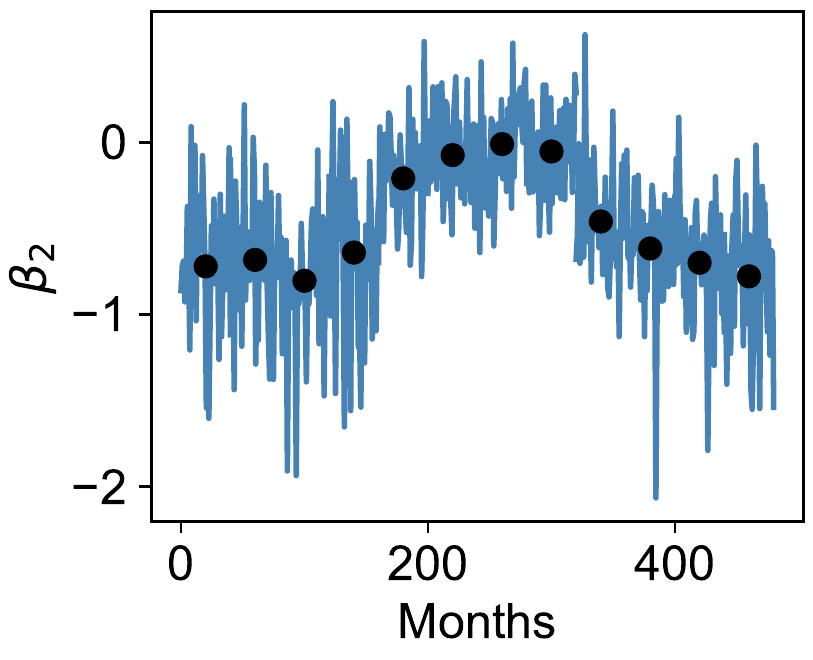}
	\caption{Monthly variability in $\beta_2$}
	\label{fig:beta_2}
	\end{subfigure}
	\begin{subfigure}[ht!]{0.49\textwidth}
	\centering	
	\includegraphics[scale=0.6]{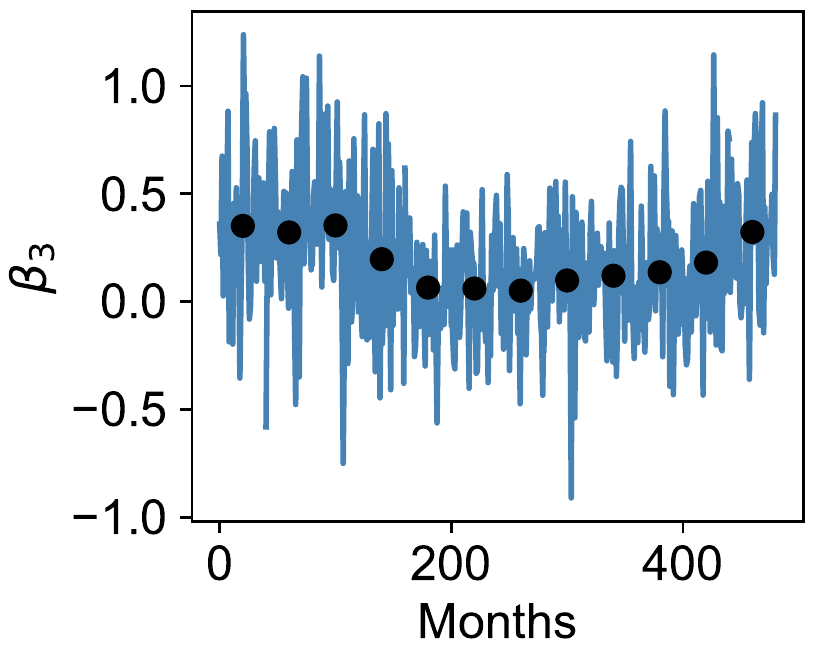}
	\caption{Monthly variability in $\beta_3$}
	\label{fig:beta_3}
	\end{subfigure}
	\begin{subfigure}[ht!]{0.49\textwidth}
	\centering
	\includegraphics[scale=0.6]{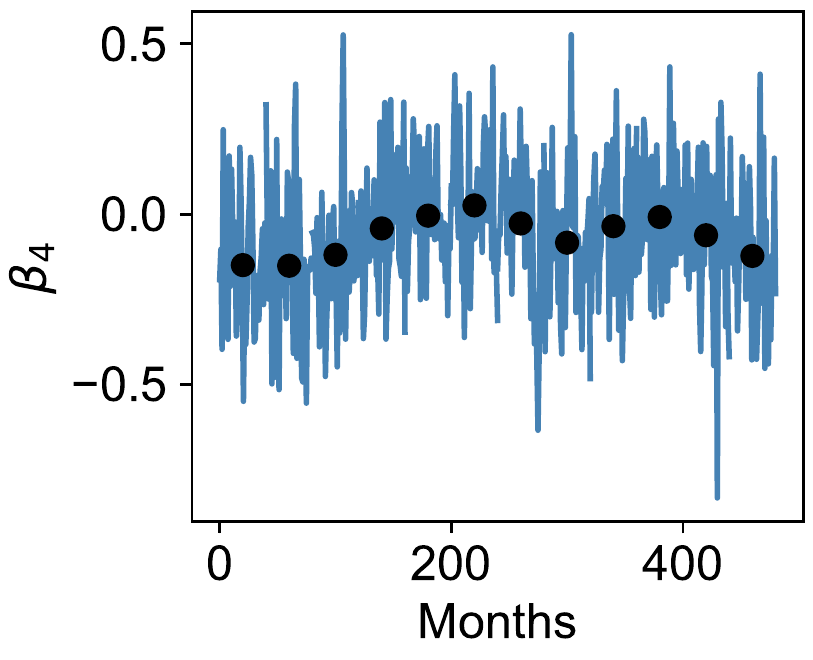}
	\caption{Monthly variability in $\beta_4$}
	\label{fig:beta_4}
	\end{subfigure}
    \caption{Monthly variability in the four AR parameters of the stratospheric temperature model}
    \label{fig:stability_ar}
\end{figure}
\begin{figure}[ht!]
	\centering
	\includegraphics[scale=0.6]{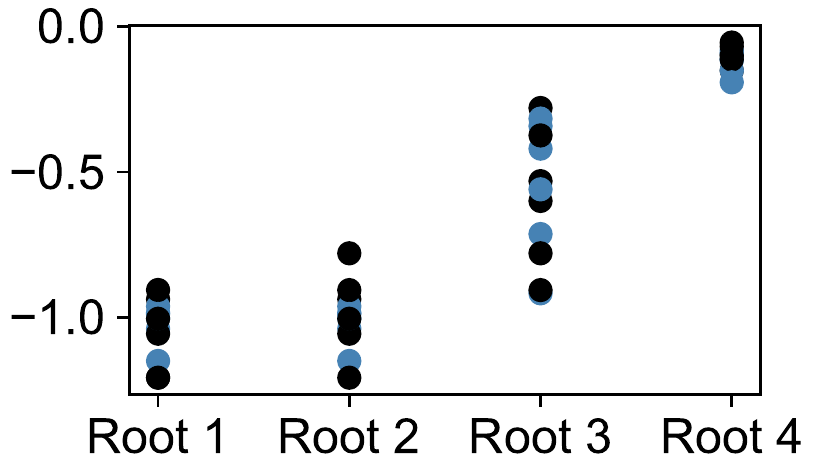}
	\caption{Real part of the roots of the eigenvalue equations (Eq.\,\eqref{eq:eigenvalueeq_ar(4)}) of the monthly varying CAR($4$) parameters}
	\label{fig:roots}
\end{figure}
Repeating the analysis in Sect.\,\ref{sect:residuals_analysis} with model residuals
\begin{align*}
    e(t) = X(t) - \sum_{k=1}^4 \beta_{k}^MX(t-k),
\end{align*}
also gives residuals on the form $e(t) = \sigma(t)\epsilon(t)$, with $\epsilon(t)$ being NIG distributed random variables with memory effects. This confirms that the proposed CAR($4$) model from Sect.\,\ref{sect:car_ar_connection} is suitable to model stratospheric temperature dynamics when driven by the multidimensional OU dynamics in Eq.\,\eqref{eq:CAR_t_dep_speed}, see Thm.\,\ref{thm:vectorial_OU_4}. 

%%%%%%%%%%%%%%%%%%%%%%%%%%%%%%%%%%%%%
%%%%%%%%%%%%%%%%%%%%%%%%%%%%%%%%%%%%%

\section{\label{sec: conclusions}Conclusions and further work}
In this paper, a novel stochastic model for stratospheric temperature is proposed. By time series analysis, it was shown that stratospheric temperature can be approximated by an AR(4) process added to a deterministic seasonality function. The seasonality function captures periodical (yearly seasonal) effects as well as a long-term trend to model stratospheric cooling. The scaled model residuals were shown to be NIG distributed. By exploiting the connection between AR and CAR processes, a continuous time model for stratospheric temperature was developed. It was shown that a L{\'e}vy driven CAR(4) process with time-dependent volatility is well suited as a continuous time, stochastic model for deseasonalized stratospheric temperature.

Some ideas for future work include incorporating a volatility which is stochastic, not just time-dependent, into the model. Furthermore, it may be of interest to analyze further why large values of variance in the stratospheric temperature seem to generate a larger dependency on stratospheric temperature the previous days. Developing a continuous time, stochastic model for stratospheric wind, potentially as a joint model with stratospheric temperature, is also relevant. In addition, based on the model presented in the current paper, one may exploit stratosphere-troposphere coupling in order to develop improved methods for pricing of weather derivatives (on surface-level). A current work in progress is to develop a dual model for stratospheric temperature, where winter season and summer season temperatures are studied separately. Such a model is particularly useful when analyzing, for example, pure winter phenomena, such as sudden stratospheric warmings.

\bibliography{bibl}% Produces the bibliography via BibTeX.

\end{document}